\newcommand{\eg}{\textit{e.g.}, }
\newcommand{\ie}{\textit{i.e.}, }
\newcommand{\tab}[1]{Table~\ref{#1}}
\newcommand{\fig}[1]{Fig.~\ref{#1}}
\newcommand{\alg}[1]{Alg.~\ref{#1}}
\newcommand{\sect}[1]{Section~\ref{#1}}
\newcommand{\eqn}[1]{(\ref{#1})}
\DeclarePairedDelimiter{\ceil}{\lceil}{\rceil}
\def\multiset#1#2{\ensuremath{\left(\kern-.3em\left(\genfrac{}{}{0pt}{}{#1}{#2}\right)\kern-.3em\right)}}
\theoremstyle{plain}
\declaretheorem[name=Lemma]{lemma}
\begin{document}
\title{Reliable Slicing of 5G Transport Networks with Dedicated Protection\vspace{-0.5em}}

\author{
	\IEEEauthorblockN{
		Nashid Shahriar\IEEEauthorrefmark{1},
		Sepehr Taeb\IEEEauthorrefmark{1},
		Shihabur Rahman Chowdhury\IEEEauthorrefmark{1},
		Mubeen Zulfiqar\IEEEauthorrefmark{1},\\
		Massimo Tornatore\IEEEauthorrefmark{2},
		Raouf Boutaba\IEEEauthorrefmark{1},
		Jeebak Mitra\IEEEauthorrefmark{3}, and
		Mahdi Hemmati\IEEEauthorrefmark{3}
	}
	\IEEEauthorblockA{
		\IEEEauthorrefmark{1}David R. Cheriton School of Computer Science, University of Waterloo,\\
		\texttt{\{nshahria | staeb | sr2chowdhury | mubeen.zulfiqar | rboutaba\}@uwaterloo.ca}}
		\IEEEauthorrefmark{2}Politecnico di Milano, \texttt{massimo.tornatore@polimi.it}\vspace{0em}\\
	\IEEEauthorrefmark{3}Huawei Technologies Canada Research Center, \texttt{\{jeebak.mitra | mahdi.hemmati\}@huawei.com}
\vspace{-2em}}
\maketitle
\pagestyle{empty}
\begin{abstract}
In 5G networks, slicing allows partitioning of network resources to meet stringent end-to-end service requirements across multiple network segments, from access to transport. These requirements are shaping technical evolution in each of these segments. In particular, the transport segment is currently evolving in the direction of the so-called elastic optical networks (EONs), a new generation of optical networks supporting a flexible optical-spectrum grid and novel elastic transponder capabilities. In this paper, we focus on the reliability of 5G transport-network slices in EON. Specifically, we consider the problem of slicing 5G transport networks, i.e., establishing virtual networks on 5G transport, while providing dedicated protection. As dedicated protection requires a large amount of backup resources, our proposed solution incorporates two techniques to reduce backup resources: (i) bandwidth squeezing, i.e., providing a reduced protection bandwidth with respect to the original request; and (ii) survivable multi-path provisioning. We leverage the capability of EONs to fine tune spectrum allocation and adapt modulation format and Forward Error Correction (FEC) for allocating rightsize spectrum resources to network slices. Our numerical evaluation over realistic case-study network topologies quantifies the spectrum savings achieved by employing EON over traditional fixed-grid optical networks, and provides new insights on the impact of bandwidth squeezing and multi-path provisioning on spectrum utilization.

\end{abstract}

\section{Introduction}\label{sec:intro}
\begin{figure*}[!t]
  \begin{minipage}[c]{0.59\textwidth}    
  \subfigure[No virtual link demand splitting (100\% BSR)]{\includegraphics[width=0.32\textwidth]{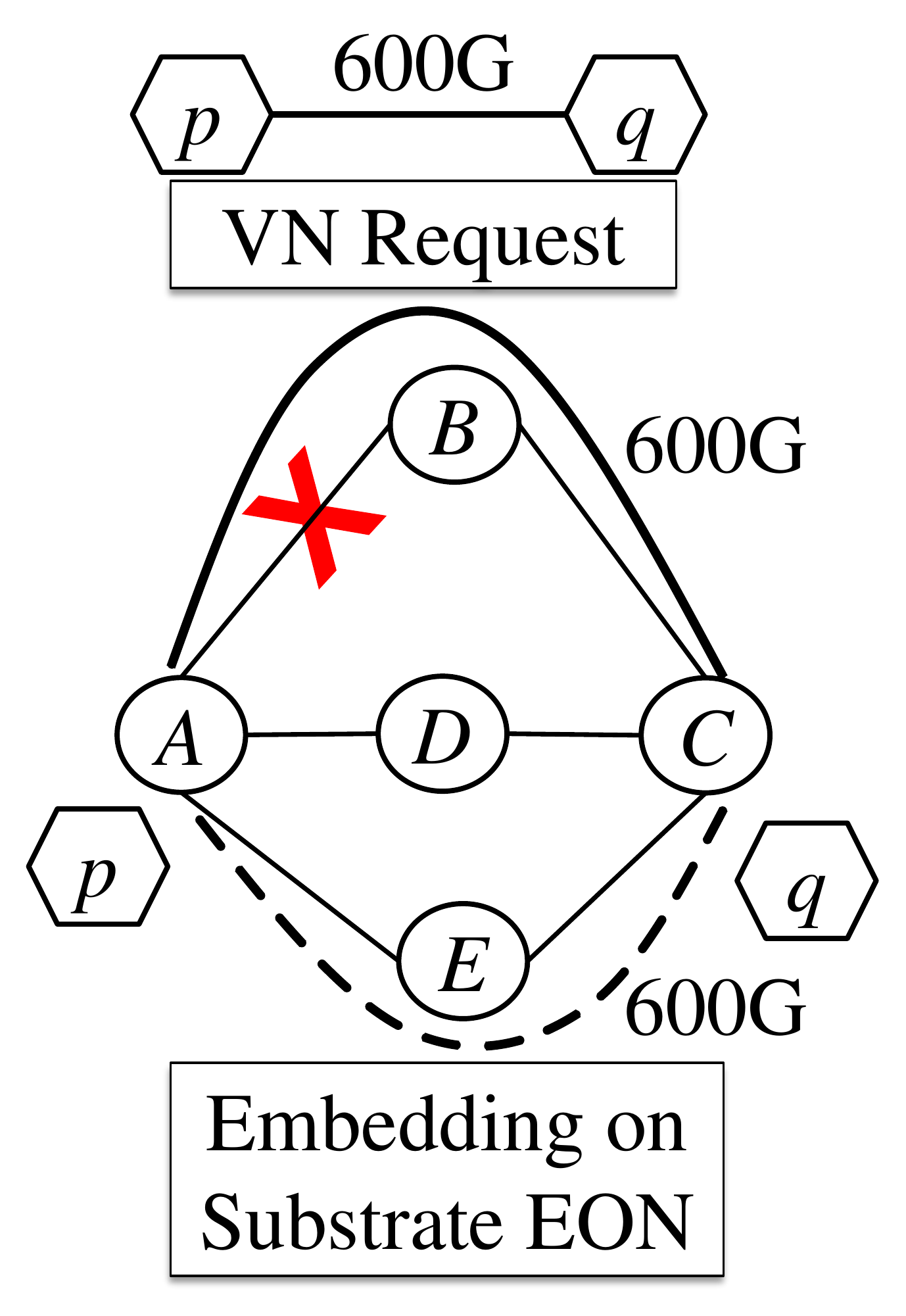}\label{subfig:nosplit}}
  \subfigure[Splitting over disjoint paths (100\% BSR)]{\includegraphics[width=0.32\textwidth]{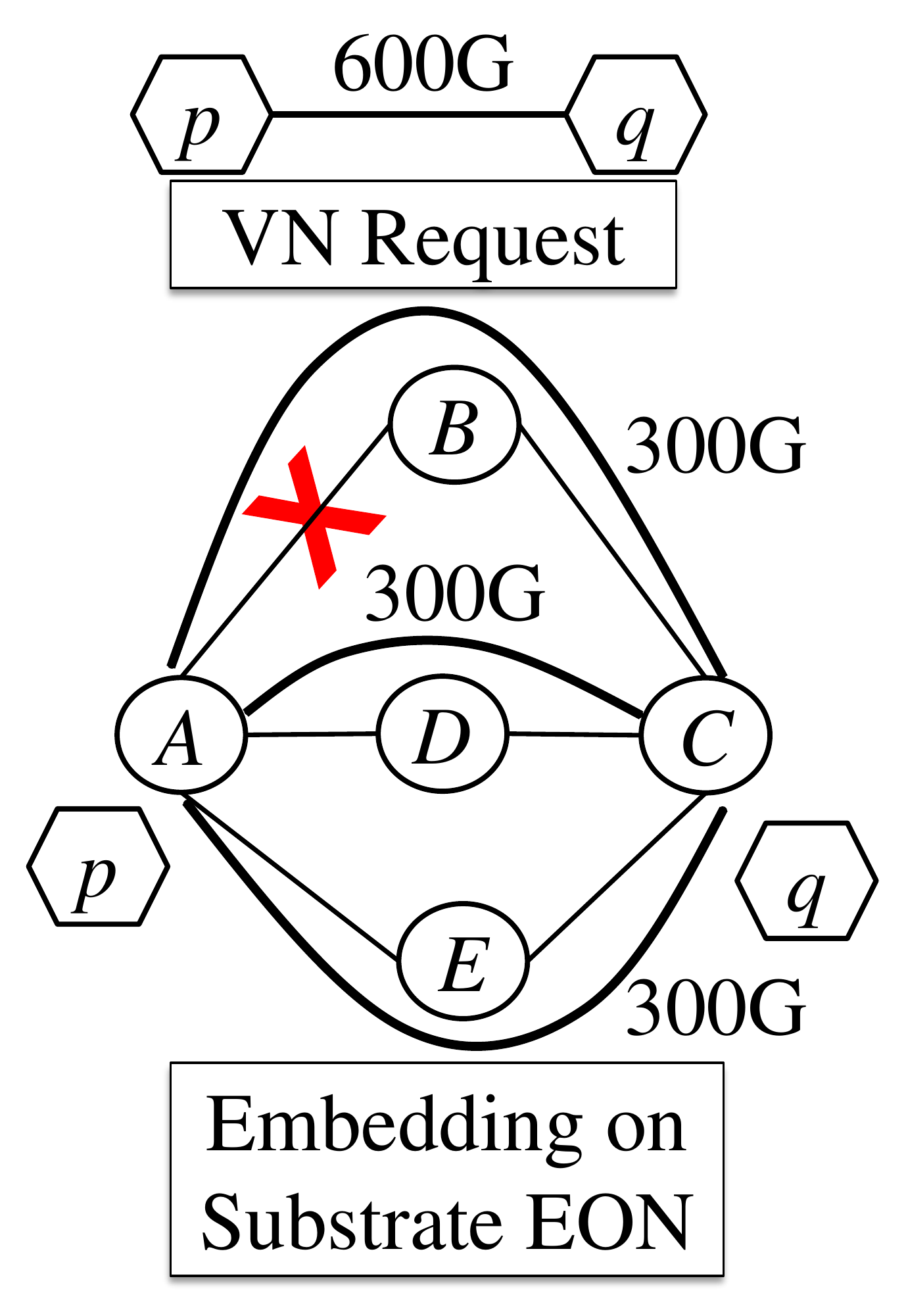}\label{subfig:split}}
  \subfigure[Splitting over disjoint paths (66\% BSR)]{\includegraphics[width=0.32\textwidth]{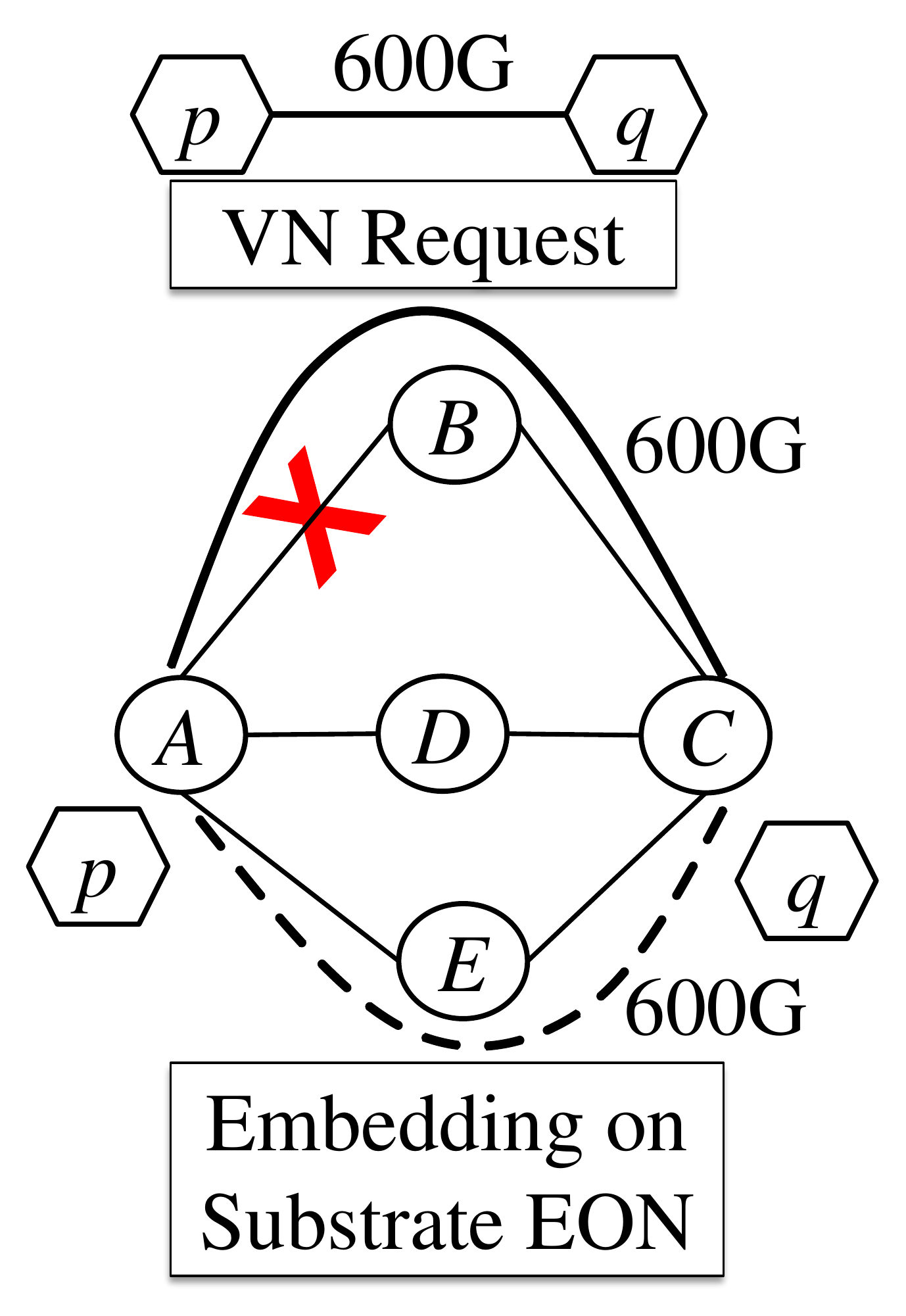}\label{subfig:split-bsr}}
  \end{minipage}
  \begin{minipage}[c]{0.39\textwidth} 
  \subfigure[Splitting over non-disjoint paths (100\% BSR)]{\includegraphics[width=0.49\textwidth]{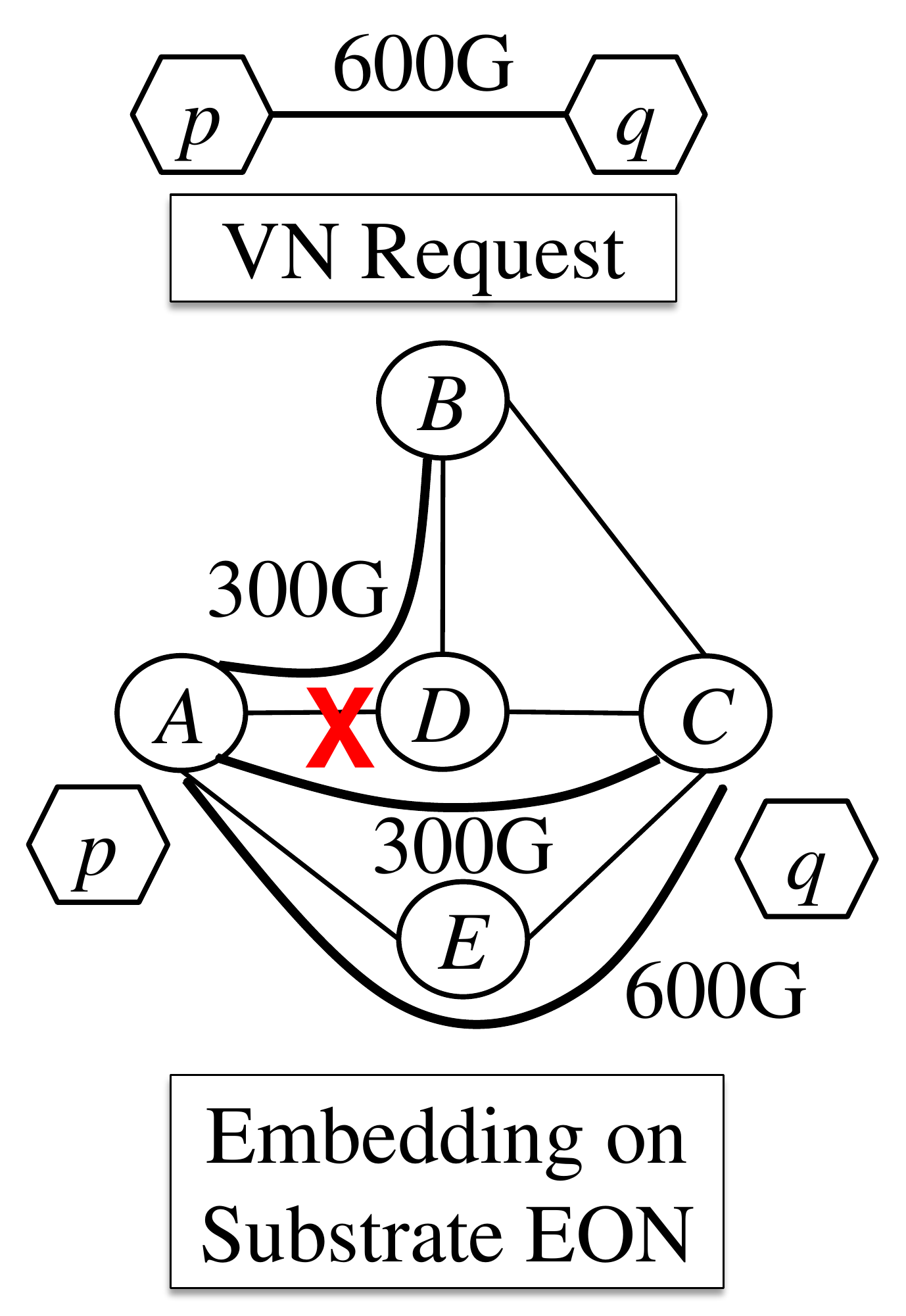}\label{subfig:split-ndp}}
  \subfigure[Splitting over non-disjoint paths (66\% BSR)]{\includegraphics[width=0.49\textwidth]{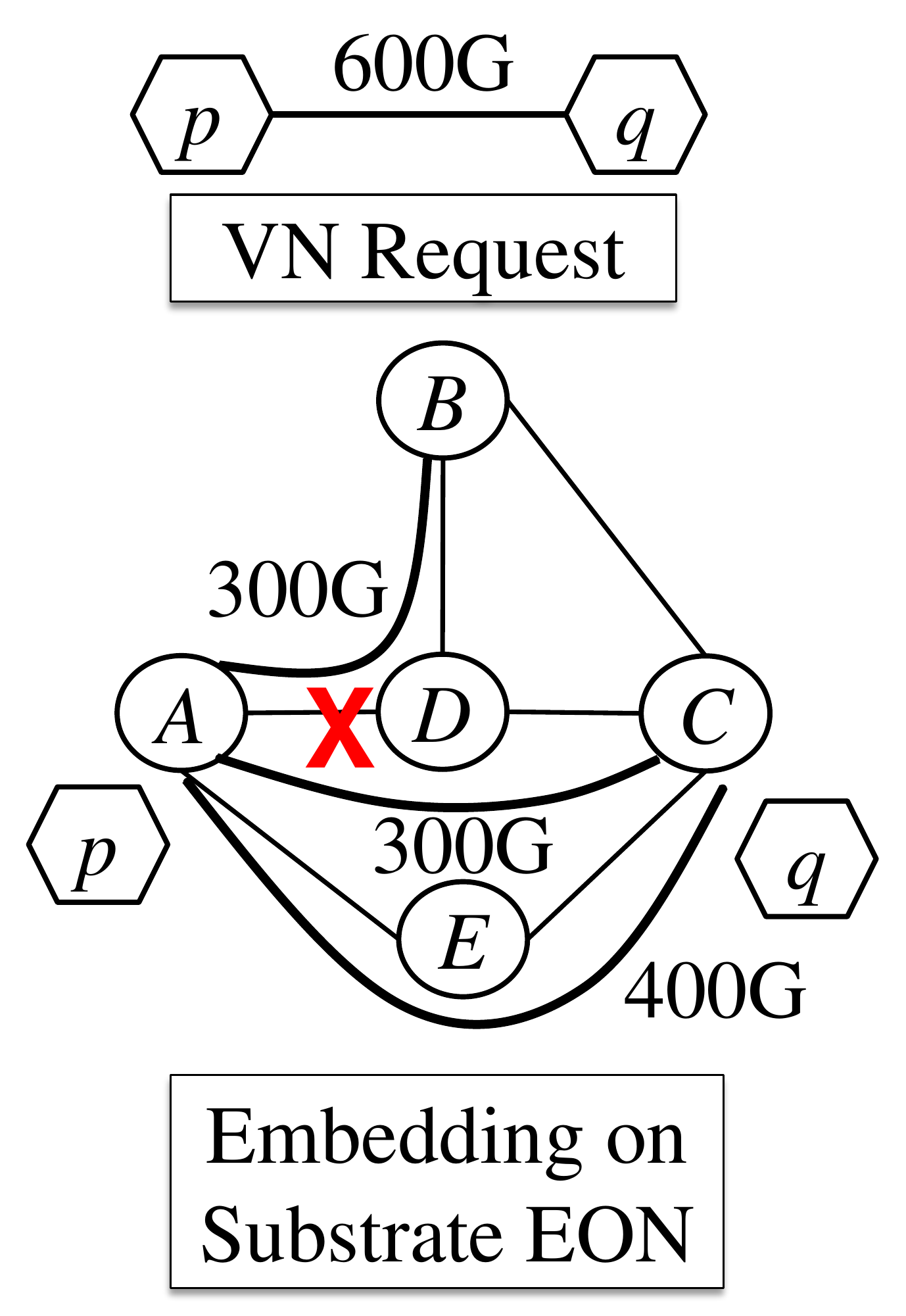}\label{subfig:split-ndp-bsr}}
  \end{minipage}
  \caption{Benefits of virtual link demand splitting and the impact of availability of disjoint paths}\label{fig:split-nodp}
  \vspace{-2em}
\end{figure*}

Transport networks employing the latest advances in elastic optical networking will form the backbone of the fifth-generation (5G) networks and beyond~\cite{gringeri2010flexible,7193532, boutaba2017elastic}. Transport-network capacity must be easily partitionable to facilitate network slicing for 5G services such as enhanced mobile broadband and ultra-reliable low latency communications~\cite{foukas2017network}. Elastic Optical Networks (EONs) are an excellent choice for establishing transport network slices, thanks to their ability to tailor network resources based on service requirements~\cite{hadi2019dynamic}. 

Network slices are usually provided in the form of virtual networks (VNs) (\ie collection of virtual nodes and virtual links (VLinks)). These VNs are anticipated to host services operating at hundreds of Gbps line-rate. As a consequence, even a short-lived network outage can cause a significant traffic disruption in these slices. However, ensuring reliability of network slices is non-trivial as failures in a transport network, such as transponder failure and fiber-cuts, are still the norm. 
A viable solution to ensure slice reliability is to pre-provision dedicated backup paths for each path used to embed the VLinks, also known as dedicated protection. Dedicated backup paths can allow fast fail-over switching within milliseconds~\cite{flexefail}, but they incur a significant resource overhead since they remain idle during failure-free operations


In this paper, we propose a VN embedding solution with dedicated protection that incorporates two techniques to decrease the network-resource consumption: (i) bandwidth squeezing, \ie the opportunity to tune the amount of bandwidth guaranteed in case of failures\cite{4528661, sone2011bandwidth, shen2014optimal, chen2015flexible, Assis:16}; and (ii) survivable multi-path provisioning~\cite{Huang:2011:SMP:2042972.2042976, khan2015simple}, \ie VLink demand splitting over multiple disjoint paths. Bandwidth squeezing is motivated by the fact that not all services require the full bandwidth to be operational, \eg a file transfer service can still be operational with reduced bandwidth during a failure. The extent of bandwidth reduction is measured by bandwidth squeezing rate (BSR), which is the percentage of original bandwidth made available after a failure. Survivable multi-path provisioning  allows us to exploit the fact that not all of the disjoint splits belonging to the same VLink are jointly affected by a single substrate link failure. Therefore, the bandwidth of the surviving paths after a failure can be reused to provide a given BSR. Note that, in this paper, we consider a single substrate link failure scenario, which is more common than multiple simultaneous link failures~\cite{studyisp}.

The illustrative example in Fig. \ref{fig:split-nodp}  demonstrates how the synergy of bandwidth squeezing and survivable multi-path provisioning can reduce backup resources for dedicated protection. If we do not allow any splitting of a VLink demand (\ie in our example, a 600Gbps demand is carried over a single primary path  $A$-$B$-$C$), then we require 2$\times$ the original bandwidth for providing a dedicated protection path ($A$-$E$-$C$), as illustrated in \fig{subfig:nosplit}. However, if we split the demand over three disjoint paths as in \fig{subfig:split}, then any two of the three paths will be active after a single substrate link failure. In this way 100\% bandwidth will still be available after a failure, even though we are allocating a total bandwidth that is only 1.5$\times$ the original demand. Bandwidth savings can be even higher if we employ bandwidth squeezing, \eg \fig{subfig:split-bsr} shows how 66\% BSR is ensured without any additional backup bandwidth. However, the advantages of multi-path provisioning depend on the availability of disjoint paths in the substrate. For instance, the lack of sufficient number of disjoint paths reduces the gain of multi-path provisioning regardless of the extent of BSR as shown in \fig{subfig:split-ndp-bsr} and  \fig{subfig:split-ndp}.

In this paper, we present our solutions for reliable slicing of 5G transport networks based on VN embedding with dedicated protection over an EON. 
Our solutions leverage EON's capabilities in finer-grained spectrum allocation, adapting modulation format and Forward Error Correction (FEC) overhead for rightsize resource allocation to the VNs. We also perform simulations using realistic network topologies, which provide valuable insight into how different levels of BSR and path diversity in the substrate EON impact the extent of backup resource savings for dedicated protection.

The rest of the paper is organized as follows. We first discuss the relevant research literature in \sect{sec:related}. Then, we formulate the optimal problem solution as an Integer Linear Program (ILP) in \sect{sec:ilp} followed by a heuristic algorithm to solve large problem instances in \sect{sec:heuristic}. 
In \sect{sec:eval}, we discuss the numerical evaluation of the proposed approaches over realistic network instances. Finally, we present our concluding remarks in \sect{sec:conclusion}.

\section{Related Works}\label{sec:related}
Reliability aspects of VN embedding~\cite{rahman2010survivable} and, more generally, of traffic provisioning in fixed-grid optical networks~\cite{ramamurthy2003survivable} have been extensively studied. Techniques to enhance reliability~\cite{ramamurthy2003survivable} are typically categorized in two broad classes: protection and restoration. The first class contains proactive approaches where backup paths are provisioned at the time a VN or a traffic matrix is provisioned, which leads to faster recovery time. Backup paths can be either dedicated or shared among multiple requests.  In contrast, restoration techniques provision backup resources after a failure has occurred. Restoration alleviates the issue of keeping a large amount of backup resources idle during regular operation, however it has significantly higher recovery time, not suitable for supporting ultra-reliable communications in 5G networks~\cite{ramamurthy2003survivable, castro2012single}, and will not be considered in this study. Similarly, shared protection could be used to reduce the amount of backup resources of dedicated protection~\cite{liu2013survivable, shen2014optimal, 7506633, cai2016multicast, yin2016shared}, but at the cost of longer protections switching time~\cite{ou2006survivable, hauser2002capacity}, hence in our study we focus on dedicated protection.


The classical routing and spectrum allocation problem in EON with dedicated path protection has been addressed in~\cite{klinkowski2013genetic, klinkowski2013evolutionary, walkowiak2014routing}. However, the problem of reliable VNE over EON has gained attention only recently. Some research works have addressed the VN embedding problem over EON with dedicated path protection~\cite{xie2014survivable, chen2016cost, zhao2016energy} while minimizing spectrum  occupation, number of regenerators \cite{xie2014survivable, chen2016cost} or energy consumption~\cite{zhao2016energy}. These works formulated the problem using a path-based ILP, where a disjoint backup path for each of the usable primary paths is precomputed. Also, they assume the modulation format is already selected for each bitrate (\eg 100Gbps bitrate is provisioned using only DP-QPSK modulation format), further limiting the solution space. None of these works consider multi-path provisioning or capture all the tunable transmission parameters made available by EONs, as we do in this work. 


To save network resources, a number of research studies considered a weaker form of protection, called bandwidth squeezing, where only a reduced bandwidth is guaranteed  after a failure, according to a given BSR~\cite{4528661, sone2011bandwidth, shen2014optimal, chen2015flexible, Assis:16, goscien2016survivable}. The advantages of using bandwidth squeezing can be amplified when combined with virtual link demand splitting over multiple disjoint substrate paths~\cite{4528661, Assis:16, ruan2013survivable}. The latter is commonly known as survivable multi-path provisioning~\cite{Huang:2011:SMP:2042972.2042976, ruan2013survivable, goscien2016survivable, khan2015simple}. Majority of the survivable multi-path approaches, except~\cite{goscien2016survivable}, assume that all the paths used for embedding a virtual link are link disjoint, and may not explore the complete solution space. In contrast to these existing works, our work stands out by considering a VN topology embedding and by allowing splitting of a virtual link demand not only over non-disjoint paths but also over multiple spectral segments on the same path, significantly increasing the complexity of the problem as discussed in \sect{sec:ilp} and \ref{sec:heuristic}.

\section{Mathematical Model and Problem Statement}\label{sec:model}

\subsection{Substrate EON}\label{subsec:sn}
The substrate EON (SN) is an undirected graph $G = (V, E)$, where $V$ and $E$ are the set of substrate optical nodes (SNodes) and substrate optical links (SLinks), respectively. We assume the SLinks to be bi-directional, \ie adjacent optical nodes are connected by one optical fiber in each direction. The optical frequency spectrum on each SLink $e = (u, v) \in E$ is divided into equal-width frequency slices represented by the set $S$ and enumerated as $1, 2 $\ldots$ |S|$.  $\mathcal{P}$ and $\mathcal{P}^k_{uv} \subset \mathcal{P}$ represent the set of all paths in $G$ and the set of $k$-shortest paths between nodes $u, v \in V$, respectively. The number of SLinks and the physical length of a path $p$ in kilometers are represented by $|p|$ and $len(p)$, respectively. We use the binary variable $\delta_{pe}$ to denote the association between a path $p \in \mathcal{P}$ and an SLink $e \in E$.

The following transmission parameters can be configured on a path $p$ with length $len(p)$ to enable data transmission with different data-rates $d \in \mathcal{D}$: \textit{baud-rate} or \textit{symbol-rate}, $b$, \textit{modulation format}, $m$, and \textit{FEC} \textit{overhead}, $f$, selected from the set of possible values $\mathcal{B}$, $\mathcal{M}$, and $\mathcal{F}$, respectively. We use a tuple $t = (d, b, m, f) \in \mathcal{T} = (\mathcal{D} \times \mathcal{B} \times \mathcal{M} \times \mathcal{F})$ to represent a transmission configuration that dictates the combination of $b \in \mathcal{B}$, $m \in \mathcal{M}$, and $f \in \mathcal{F}$ yielding a data-rate $d \in \mathcal{D}$. For the sake of representation we use $t^{(d)}, t^{(b)}, t^{(m)}$, and $t^{(f)}$ to denote the data-rate, baud-rate, modulation format, and FEC overhead of a configuration $t \in \mathcal{T}$. A \textit{reach table} $\mathcal{R}$, computed based on physical layer characteristics, specifies the maximum length of a path (\ie the \textit{reach} $r_t$) capable of retaining a satisfactory optical signal to noise ratio when configured according to a transmission configuration $t \in \mathcal{T}$. Finally, $n_t$ denotes the number of slices required for a transmission configuration $t \in \mathcal{T}$, which is dependent on the parameters of $t$.
\subsection{Virtual Network}\label{subsec:vn}
The virtual network (VN) is represented by an undirected graph $\bar{G} = (\bar{V}, \bar{E})$, where $\bar{V}$ and $\bar{E}$ are the set of virtual nodes (VNodes) and virtual links (VLinks), respectively. The function $\tau: \bar{V} \rightarrow V$ represents VNode to SNode mapping and is an input to our problem (a common assumption for optical network virtualization~\cite{zhang2012survey}). Each virtual link $\bar{e} \in \bar{E}$ has a bandwidth requirement $\bar{\beta}_{\bar{e}}$ and a reliability requirement $0 < BSR_{\bar{e}} \leq 100$, which indicates the percentage of original bandwidth that should be available after an SLink fails. $BSR_{\bar{e}}$ can also be used to realize a bandwidth profile with a maximum and minimum demand similar to~\cite{hadi2019dynamic}. We allow VLinks to be mapped on multiple substrate paths (SPaths) (similar to~\cite{pages2014optimal, shahriarinfocom19}), each with a lower data-rate than $\bar{\beta}_{\bar{e}}$. Splitting $\bar{\beta}_{\bar{e}}$ over multiple SPaths is a feasible way to support higher data-rates (\eg $\ge400$Gbps) that limit the number of usable paths due to their shorter reaches. However, we restrict the number of VLink splits to maximum $q$ $(\geq 1)$. Such multi-path embedding is supported by technologies such as Virtual Concatenation (VCAT) in Optical Transport Network (OTN)~\cite{bernstein2006vcat} or bonding capabilities of FlexEthernet~\cite{flexe}. 

\subsection{Problem Statement}\label{subsec:ps}
Given an SN $G$, a reach table $\mathcal{R}$, and a VN request $\bar{G}$ with given VNode mapping function $\tau$.:
\begin{itemize}
	\item Compute the link embedding function $\gamma : \bar E \rightarrow \chi : \chi \subset \mathcal{P} \times \mathcal{T} \times S^2$ and $1 \leq |\chi| \leq q$, \ie compute up to a maximum of $q$ splits for each VLink $\bar{e} \in \bar{E}$ such that $0.01\times BSR_{\bar{e}}\times \bar{\beta}_{\bar{e}}$ bandwidth is available during an SLink failure and at least $\bar{\beta}_{\bar{e}}$ bandwidth is available during the rest of the time. For each split, $\gamma$ should select an SPath and an appropriate transmission configuration $t \in \mathcal{T}$ from the reach table $\mathcal{R}$, and allocate a contiguous segment of slices represented by the starting and ending slice index on each SLink along the SPath. Note that the same SPath can be used multiple times as the splits of a VLink following the reasoning in \cite{shahriarinfocom19}. $\chi_{\bar{e}i} = (p, t, s_b, s_t) |1 \leq i \leq q$ represents the $i$-th split, where $\chi_{\bar{e}i}^{(p)}$ and $\chi_{\bar{e}i}^{(t)}$ denote the selected SPath and transmission configuration for the $i$-th split, respectively. In addition, allocation of spectrum slices for the $i$-th split begins at index $\chi_{\bar{e}i}^{(s_b)}$ and ends at index $\chi_{\bar{e}i}^{(s_t)}$ along each SLink in the SPath $\chi_{\bar{e}i}^{(p)}$. 
	

	
	\item The total number of slices required to provision the VN is minimum according to the following cost function:\vspace{-0.25em}
	\begin{equation}
		\sum_{\forall \bar{e} \in \bar{E}} \sum_{i = 1}^{q} (\chi_{\bar{e}i}^{(s_t)} - \chi_{\bar{e}i}^{(s_b)} + 1) \times |\chi_{\bar{e}i}^{(p)}|
	\label{cost-eqn}\end{equation}
	Here, $|\chi_{\bar{e}i}^{(p)}|$ is the number of SLinks on the SPath $\chi_{\bar{e}i}^{(p)}$.
\end{itemize}
The above is subject to substrate resource constraints, and spectral contiguity (\ie the allocated slices of each split are always adjacent to each other) and continuity (\ie the same sequence of slices are allocated on each SLink along an SPath) constraints on the lightpaths.

\subsection{Pre-computations}\label{subsec:pre}
For each VLink $\bar{e} \in \bar{E}$, we pre-compute $\mathcal{P}_{\bar{e}}^k$, a set of $k$ shortest paths between the pair of SNodes where the VLink's endpoints' are mapped. For each SPath $p \in \mathcal{P}_{\bar{e}}^k$, we pre-compute the set of admissible transmission configurations, $\mathcal{T}_{\bar{e}p} \subset \mathcal{T}$, such that each configuration $t \in \mathcal{T}_{\bar{e}p}$ results in a reach $r_t \geq len(p)$ and has a data-rate $t^{(d)}$. $\mathcal{T}_{\bar{e}}$ contains all the distinct tuples suitable for $\bar{e}$ and is defined as $\bigcup_{\forall p \in \mathcal{P}_{\bar{e}}^k} \mathcal{T}_{\bar{e}p}$.



\section{Problem Formulation}\label{sec:ilp}
We present a path-based ILP formulation for optimally solving our problem. Note that some of the constraints, except the reliability constraints, have been presented in different forms in different research works~\cite{chowdhury2009virtual, wang2011study, shahriarinfocom19}. In the interest of completeness, we report below all the constraints. 

\subsection{Decision Variables}\label{subsec:c_dvariables}

We allow a VLink's bandwidth demand to be satisfied by provisioning slices over one or more SPaths where an SPath can be used more than once (up to a maximum of $q$) as discussed in \ref{subsec:ps}. To model the same SPath appearing more than once in a VLink's embedding, we assume each transmission configuration on an SPath can be instantiated multiple times (up to a maximum of $q$ times). The following variable represents VLink mapping:
\begin{align*}
w_{\bar{e}pti} = \begin{cases}
1 & \text{if } \bar{e} \in \bar{E} \text{ uses } i\text{-th} \text{ instance of } t \in \mathcal{T}_{\bar{e}p}\\
  & \text{ on path } p \in \mathcal{P}_{\bar{e}}^k\\
0 & \text{otherwise}
\end{cases}
\end{align*}
Finally, the following decision variable creates the relationship between a mapped SPath and the slices in its SLinks:
\begin{align*}
y_{\bar{e}ptis} = \begin{cases}
1 & \text{if } \bar{e} \in \bar{E} \text{ uses slice } s \in S \text{ on path } p \in \mathcal{P}_{\bar{e}}^k\\
  & \text{with the } i\text{-th} \text{ instance of } t \in \mathcal{T}_{\bar{e}p}\\
0 & \text{otherwise}
\end{cases}
\end{align*}
\vspace{-1em}

\subsection{Constraints}\label{subsec:c_constraints}
\subsubsection{VLink demand constraints}
We provision a VLink by splitting it across multiple (up to $q$) SPaths. Constraint \eqn{eq:demand} ensures that for each VLink $\bar{e} \in \bar{E}$, the sum of data-rates resulting from applying the selected transmission configuration on the selected splits is equal or larger than the VLink's demand. \eqn{eq:max-split} enforces an upper limit on the number of splits.
\begin{align}
\forall \bar{e} \in \bar{E}: \bar{\beta}_{\bar{e}} \leq \displaystyle\sum_{\forall p \in \mathcal{P}_{\bar{e}}^k} \sum_{\forall t \in \mathcal{T}_{\bar{e}p}} \sum_{i = 1}^{q} (w_{\bar{e}pti}\times t^{(d)}) \label{eq:demand}
\end{align}
\begin{align}
\forall \bar{e} \in \bar{E}: \displaystyle\sum_{\forall p \in \mathcal{P}_{\bar{e}}^k} \sum_{\forall t \in \mathcal{T}_{\bar{e}p}} \sum_{i = 1}^{q} w_{\bar{e}pti} & \leq q \label{eq:max-split}
\end{align}

\subsubsection{Slice assignment and Spectral Contiguity constraints}
We ensure by \eqn{eq:sl-demand} that if a path $p$ is selected with a specific transmission configuration $t$, then the required number of slices $n_t$ to support the data-rate $t^{(d)}$ is allocated on the path. \eqn{eq:sl-once} ensures that each slice on an SLink is allocated to at most one path. Finally, \eqn{eq:contiguity} ensures the slices allocated on each link of a path form a contiguous frequency spectrum.
\begin{flalign}
\linespread{0.5}
& \forall \bar{e} \in \bar{E}, \forall p \in \mathcal{P}_{\bar{e}}^k, \forall t \in \mathcal{T}_{\bar{e}p}, 1 \leq i \leq q: \displaystyle\sum_{\forall s \in S} y_{\bar{e}ptis} = n_t  w_{\bar{e}pti}\label{eq:sl-demand}\vspace{-1.5em}
\end{flalign}
\begin{flalign}
& \forall e \in E, \forall s \in S: \displaystyle\sum_{\forall \bar{e} \in \bar{E}} \sum_{\forall p \in \mathcal{P}_{\bar{e}}^k} \sum_{\forall t \in \mathcal{T}_{\bar{e}p}} \sum_{i = 1}^{q} w_{\bar{e}pti} y_{\bar{e}ptis} \delta_{pe}  \leq 1 \label{eq:sl-once}\\\vspace{-1.5em}
& \forall \bar{e} \in \bar{E}, \forall p \in \mathcal{P}_{\bar{e}}^k, \forall t \in \mathcal{T}_{\bar{e}p}, 1 \leq i \leq q, 1 \leq s \leq |S| - 1:\nonumber\\ & \displaystyle \sum_{s' = s+2}^{|S|} y_{\bar{e}ptis'} \leq |S| \times (1-y_{\bar{e}ptis}+ y_{\bar{e}pti(s+1)})\label{eq:contiguity}\vspace{-.75em}
\end{flalign}


\subsubsection{Reliability constraints}
\eqn{eq:rel} ensures that for each single substrate link failure scenario, the aggregate data rate of the unaffected splits of a VLink $\bar{e} \in \bar{E}$ is at least $BSR_{\bar{e}}$ percentage of the original VLink demand.
\begin{flalign}
 \nonumber \forall \bar{e} \in \bar{E}, \forall e \in E: (\displaystyle \sum_{\forall p \in \mathcal{P}_{\bar{e}}^k} \sum_{\forall t \in \mathcal{T}_{\bar{e}p}} \sum_{i = 1}^{q} w_{\bar{e}pti}  t^{(d)}) \\
- (\sum_{\forall p \in \mathcal{P}_{\bar{e}}^k} \sum_{\forall t \in \mathcal{T}_{\bar{e}p}} \sum_{i = 1}^{q} w_{\bar{e}pti}  t^{(d)}  \delta_{pe})  \geq 0.01  BSR_{\bar{e}}  \bar{\beta}_{\bar{e}} \label{eq:rel}
\end{flalign}

\subsection{Objective Function}\label{subsec:c_objective}
Our cost function minimizes the total number of spectrum slices required to embed all the VLinks of a VN as shown in the first part of \eqn{obj}. However, to break ties among multiple solutions with the same total number of slices, we use the second term with a fractional weight $\epsilon$ in \eqn{obj} that minimizes the  number of splits over all the VLinks. This gives us the following objective function:\vspace{-.5em}
\begin{equation}
\linespread{0.5}
\begin{split}
\text{minimize} (\displaystyle\sum_{\forall \bar{e} \in \bar{E}} \sum_{\forall p \in \mathcal{P}_{\bar{e}}^k} \sum_{\forall t \in \mathcal{T}_{\bar{e}p}} \sum_{i=1}^{q} \sum_{\forall s \in S} y_{\bar{e}ptis} |p| + \\ \epsilon \sum_{\forall \bar{e} \in \bar{E}} \sum_{\forall p \in \mathcal{P}_{\bar{e}}^k} \sum_{\forall t \in \mathcal{T}_{\bar{e}p}} \sum_{i=1}^{q} w_{\bar{e}pti})\end{split}
\label{obj}
\end{equation}
\section{Heuristic Algorithm}\label{sec:heuristic}
Given the limited scalability of the ILP formulation, we develop a heuristic algorithm to solve large instances of our problem. In this section, we first give an overview of the main steps involved in reliable VN embedding for a given node mapping (\sect{sec:vn-heuristic}). Then, we discuss how we select the order of VLinks to be embedded for increasing the chances of finding a feasible solution (\sect{sec:heuristic-ordering}). Finally, we discuss the embedding process of a single VLink (\sect{sec:single-heuristic}).


\subsection{Heuristic Solution for Reliable VN Embedding}\label{sec:vn-heuristic}
\SetKwProg{Fn}{function}{}{}
{
    \vspace{-1em}
	\linespread{0.7}
	\small
	\begin{algorithm}[!ht]
		\DontPrintSemicolon
		\caption{Algorithm for VN Embedding}
		\label{alg:Compute-VN}
		\Fn(){VNEmbedding($G, \bar{G}$, $\tau$)}{
			$\mathcal{\bar{E}} \gets$ GetVLinkOrder($G, \bar{G}$)\\
			\ForEach{$\bar{e} \in \mathcal{\bar{E}}$ in the sorted order}{
				$\mathcal{I}_{\bar{e}} \gets $FindEmbedding($G, \bar{e}, \mathcal{P}_{\bar{e}}^i, \mathcal{T}_{\bar{e}}$)\\
					
					\ForEach{$e \in p| p \in \mathcal{I}_{\bar{e}}.\mathbb{P}$}
					{
						Perform slice assignment using $\mathcal{I}_{\bar{e}}.\mathcal{S}$
					}
					$\chi_{\bar{e}}.\mathcal{P} \gets \mathcal{I}_{\bar{e}}.\mathbb{P}$, $\chi_{\bar{e}}.\mathcal{T} \gets \mathcal{I}_{\bar{e}}.\mathbb{T}_{\mathbb{P}}$\\
				\If{$\chi_{\bar{e}} = \phi$}{\Return $<\phi, \phi>$}
			}
			\Return $<\tau: \bar{V} \rightarrow V, \gamma : \bar E \rightarrow \chi>$
			
		}
	\end{algorithm}
	\vspace{-0.75em}
}
\alg{alg:Compute-VN} takes as input a VN $\bar{G}$, an EON $G$, and a node mapping function $\tau: \bar{V} \rightarrow V$. One way of computing a cost efficient VLink mapping for a given VNode mapping is to consider all $|\bar{E}|!$ possible orders for sequentially embedding VLinks. The lowest cost mapping among all possible VLink orders then can be chosen as a cost effective solution for the given VNode mapping. However, this brute-force approach is not scalable. Instead, \alg{alg:Compute-VN} considers only one sequential VLink order that is computed to converge to a solution within a reasonable time. To increase the chances of finding a feasible VN embedding, \alg{alg:Compute-VN} invokes \alg{alg:VLink-Ordering} that finds a good order of the VLinks ($\mathcal{\bar{E}}$) to embed. For each VLink $\bar{e} \in \mathcal{\bar{E}}$ in the computed order, \alg{alg:Compute-VN} finds the embedding solution based on $\mathcal{P}_{\bar{e}}^i$ by invoking \alg{alg:Compute-all} (line 4). \alg{alg:Compute-VN} then allocates spectrum slices on all SLinks present in the solution and updates the VLink embedding $\chi_{\bar{e}}$ accordingly. If no such solution for $\bar{e}$ can be found, the embedding of the VN is rejected. 

\subsection{Compute VLink Ordering}\label{sec:heuristic-ordering}



\alg{alg:VLink-Ordering} finds an order of VLinks, $O$, that increases the chances of finding a feasible embedding for all the VLinks. Computing a feasible embedding of a VLink depends on the availability of contiguous slices in the candidate SPaths of that VLink. If the candidate SPaths $\mathcal{P}_{\bar{e}_i}^k$ of a VLink $\bar{e}_i$ have many SLinks that are common with the candidate SPaths of other VLinks that appear before $\bar{e}_i$ in a VLink order $o$, the slices of these SLinks (correspondingly, SPaths) may be exhausted or fragmented in such a way that the embedding of $\bar{e}_i$ becomes infeasible. Note that VLinks that come after $\bar{e}_i$ in $o$ do not have any impact on the embedding of $\bar{e}_i$ even though $\bar{e}_i$ and the VLinks after $\bar{e}_i$ have common SLinks in their candidate SPath sets. Hence, effective commonality of $\bar{e}_i$ depends on $o$ and is defined as follows:
\begin{equation}
w(\bar{e}_i)^o = \sum_{\bar{e}_j \text{ precedes } \bar{e}_i \text{ in } o | \bar{e}_i \in \bar{E} \text{, } \bar{e}_j \in \bar{E} \text{, } \bar{e}_i \ne \bar{e}_j} w(\bar{e}_i, \bar{e}_j)
\label{eq:eq-node-weight1}
\end{equation}
where, $w(\bar{e}_i, \bar{e}_j)$ is the commonality between two VLinks $\bar{e}_i$ and $\bar{e}_j$ irrespective of any order and is defined as follows:
\begin{equation}
w(\bar{e}_i, \bar{e}_j) = |\{(p_{\bar{e}_i}, p_{\bar{e}_j}) : (p_{\bar{e}_i}, p_{\bar{e}_j}) \in \mathcal{P}_{\bar{e}_i}^k \times \mathcal{P}_{\bar{e}_j}^k \wedge p_{\bar{e}_i} \cap p_{\bar{e}_j} \neq \phi\}|
\label{eq:eq-edge-weight}
\end{equation}
A high value of $w(\bar{e}_i, \bar{e}_j)$ indicates more SPath pairs from the candidate SPath sets of $\bar{e}_i$ and $\bar{e}_j$ have common SLinks. Using \eqn{eq:eq-node-weight1}, we define the commonality index of $o$ as follows:
\begin{equation}
w^o= \max_{\bar{e}_i \in \bar{E}} w(\bar{e}_i)^o
\label{eq:eq-link-weight}
\end{equation}

To increase the chances of finding a feasible embedding for all the VLinks, \alg{alg:VLink-Ordering} finds an order $O$ that minimizes $w^o$ for all possible orders of $\bar{E}$. To do so, \alg{alg:VLink-Ordering} first constructs an auxiliary graph $Z=(N, A)$ based on a VN $\bar{G}$ and EON $G$ (lines 2). The auxiliary graph $Z=(N, A)$ is a weighted undirected graph where $N$ is the set of nodes and $A$ is the set of edges. There is a one-to-one correspondence between a node $n_{\bar{e}_i} \in N$ and a VLink $\bar{e}_i \in \bar{E}$. Therefore, an order of the nodes in $N$ corresponds to an order of the VLinks in $\bar{E}$. We include a weighted edge $(n_{\bar{e}_i}, n_{\bar{e}_j}) \in A$ between two distinct nodes $n_{\bar{e}_i} \in N$ and $n_{\bar{e}_j} \in N$ with weight $w(\bar{e}_i, \bar{e}_j)$. Using \eqn{eq:eq-edge-weight}, we define the weight of a node $n_{\bar{e}_i} \in N$ as follows:
\begin{equation}
w(n_{\bar{e}_i}) = \sum_{n_{\bar{e}_j} | (n_{\bar{e}_i}, n_{\bar{e}_j}) \in A} w(\bar{e}_i, \bar{e}_j)
\label{eq:eq-node-weight}
\end{equation}
To compute a node order (or a corresponding VLink order) that minimizes $w^o$, \alg{alg:VLink-Ordering} iteratively computes $w(n_{\bar{e}_i})$ using \eqn{eq:eq-node-weight} for all the nodes in $N$. Then \alg{alg:VLink-Ordering} chooses the node $n_{\bar{e}_i}^{min}$ with the minimum weight $w(n_{\bar{e}_i})$ and inserts the corresponding VLink to the last empty spot in the VLink order $O$ (line 15). Afterwards, the algorithm updates $Z$ by removing $n_{\bar{e}_i}^{min}$ and all of its incident edges. The updated auxiliary graph $Z$ allows us to use \eqn{eq:eq-node-weight} to compute $w(n_{\bar{e}_i})$ that corresponds to $w(\bar{e}_i)^O$ as updated $Z$ now only have all the nodes (and adjacent edges) that precede $n_{\bar{e}_i}^{min}$ in the node order, or equivalently $\bar{e}_i^{min}$ in $O$. \alg{alg:VLink-Ordering} repeats this process until all the VLinks are added to $O$.

\vspace{-0.5em}
\begin{restatable}{theorem}{onlytheorem}
\alg{alg:VLink-Ordering} returns a VLink ordering with the minimum commonality index. (See appendix for proof)
\end{restatable}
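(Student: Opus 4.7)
The plan is to exploit the fact that \alg{alg:VLink-Ordering} constructs $O$ back-to-front: the node removed from $Z$ at iteration $i$ is inserted into position $|\bar{E}|-i+1$ of $O$. Let $v_1, \ldots, v_n$ denote the nodes in the order in which they are removed from $Z$ (where $n = |\bar{E}|$), and let $Z_i$ be the remaining subgraph at the start of iteration $i$, \ie the subgraph induced on $\{v_i, v_{i+1}, \ldots, v_n\}$. The key observation is that in the final ordering $O$, the nodes preceding $v_i$ are exactly $\{v_{i+1}, \ldots, v_n\}$. Consequently, writing $w_{Z_i}(v)$ for the node weight from \eqn{eq:eq-node-weight} computed using only the edges of $Z_i$, we have $w(\bar{e}_i)^O = \sum_{j>i} w(\bar{e}_i, \bar{e}_j) = w_{Z_i}(v_i)$. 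Since the greedy rule picks $v_i$ to minimize $w_{Z_i}(\cdot)$ over $Z_i$, this gives $w(\bar{e}_i)^O = \min_{v \in Z_i} w_{Z_i}(v)$, and hence
\begin{equation*}
w^O \;=\; \max_{1 \le i \le n}\; \min_{v \in Z_i} w_{Z_i}(v).
\end{equation*}

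Next I would prove optimality by contradiction, mirroring the classical analysis of the Matula--Beck degeneracy ordering. Suppose some VLink ordering $O^*$ satisfied $w^{O^*} < w^O$, and fix an index $i^\star$ attaining the outer maximum above. By the greedy choice at iteration $i^\star$, every $v \in Z_{i^\star}$ satisfies $w_{Z_{i^\star}}(v) \ge w^O$. Let $u$ be the vertex of $Z_{i^\star}$ that appears latest among all vertices of $Z_{i^\star}$ inside $O^*$. Every other vertex of $Z_{i^\star}$ precedes $u$ in $O^*$, so the back-weight of $u$ in $O^*$ is at least its $Z_{i^\star}$-weight:
\begin{equation*}
w(u)^{O^*} \;\ge\; w_{Z_{i^\star}}(u) \;\ge\; w^O \;>\; w^{O^*},
\end{equation*}
which contradicts $w^{O^*} = \max_{\bar{e}} w(\bar{e})^{O^*}$.

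The step requiring the most care is the identity $w(\bar{e}_i)^O = w_{Z_i}(v_i)$. It relies critically on the algorithm inserting $v_i$ into the \emph{last} empty slot rather than the first, so that the nodes removed from $Z$ after $v_i$ are precisely the nodes appearing \emph{before} $v_i$ in $O$; reversing this convention would break the identification of back-weights with subgraph weights and the whole argument collapses. Once this back-to-front correspondence is pinned down, the ``greedy stays ahead'' exchange invoking the latest vertex of $Z_{i^\star}$ in $O^*$ goes through in the standard way, and the minimality of $w^O$ follows.
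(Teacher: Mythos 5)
Your proof is correct, but it follows a genuinely different route from the paper's. The paper argues by an iterated exchange: assuming a strictly better ordering $o^*$ exists, it repeatedly moves the VLink that the greedy ordering places at the largest disagreeing index into that position, and proves a lemma (by splitting the affected VLinks into three groups) that each such move never increases the commonality index, so after at most $|\bar{E}|$ moves $o^*$ is transformed into the greedy ordering with no increase, contradicting strict optimality. You instead exploit the back-to-front construction to establish the identity $w(\bar{e}_{v_i})^O = w_{Z_i}(v_i)$ and hence the min--max characterization $w^O = \max_i \min_{v \in Z_i} w_{Z_i}(v)$, and then refute any better ordering $O^*$ with a single witness: the vertex $u$ of $Z_{i^\star}$ appearing latest in $O^*$ has all other $Z_{i^\star}$-vertices before it, so $w(u)^{O^*} \geq w_{Z_{i^\star}}(u) \geq w^O > w^{O^*}$, a contradiction. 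This is exactly the Matula--Beck degeneracy-ordering argument, and it buys a shorter, more structural proof that avoids the iterated exchange and the three-group case analysis; the paper's exchange proof is more elementary and self-contained but longer. One small point worth making explicit in your write-up: the inequality $w(u)^{O^*} \geq w_{Z_{i^\star}}(u)$ silently uses the fact that the pairwise commonalities $w(\bar{e}_i,\bar{e}_j)$ of \eqn{eq:eq-edge-weight} are non-negative (they are cardinalities), since the predecessors of $u$ in $O^*$ may strictly contain $Z_{i^\star}\setminus\{u\}$; the paper's Lemma~1 relies on the same property in its group-$B$ step, so this is not a gap, just an assumption to state.
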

\SetKwProg{Fn}{function}{}{}
{
    \vspace{-.75em}
	\linespread{0.7}
	\small
	\begin{algorithm}[h!]
		\DontPrintSemicolon
		\caption{Alg. for finding a VLink embedding order}
		\label{alg:VLink-Ordering}
		\Fn(){GetVLinkOrder($G, \bar{G}$)}{
			$Z = (N, A) \leftarrow$ AuxGraph($\bar{G}$),
			$O[1...n] \gets \phi$, $i \gets n$\\
			\While{$N \neq \phi$}{
				\lForEach{$n_{\bar{e}_i} \in N$}{
                    Compute $w(n_{\bar{e}_i})$ using \eqn{eq:eq-node-weight}
				}
                $n_{\bar{e}_i}^{min} \gets n_{\bar{e}_i}$ with minimum $w(n_{\bar{e}_i})$\\
				$O[i] \gets \bar{e}_i$ corresponding to $n_{\bar{e}_i}^{min}$\\
				$N \gets N - \{n_{\bar{e}_i}\}$\\
                \lForEach{$(n_{\bar{e}_i}^{min}, n_{\bar{e}_j}) \in A$}{
                $A \gets A \setminus \{(n_{\bar{e}_i}^{min}, n_{\bar{e}_j})\}$
                }
				$i \gets i - 1$\\
			}
			\Return $O$\\
		}
	\end{algorithm}
	\vspace{-2.25em}
}

\subsection{Compute Embedding of a Single VLink}\label{sec:single-heuristic}
The embedding of a VLink computed by \alg{alg:Compute-all} consists of a multi-set of SPaths where each SPath in the multi-set has an associated transmission configuration and spectrum slice allocation. Recall from \sect{sec:intro} that bandwidth requirement for dedicated protection against single SLink failure for a VLink depends on BSR and the number of disjoint SPaths used in the multi-set of SPaths of a solution. To exploit disjointness of the SPaths in the candidate SPath set $\mathbb{P}_{\bar{e}}^k$ of a VLink $\bar{e}$, \alg{alg:Compute-all} first computes disjoint SPath groups from the SPaths in $\mathbb{P}_{\bar{e}}^k$. We define a disjoint SPath group $H_{\bar{e}}$ from $\mathcal{P}_{\bar{e}}^k$ as follows:
\begin{equation}
\begin{split}
 H_{\bar{e}} = \{\delta\mathcal{P}_{\bar{e}}^k | \delta\mathcal{P}_{\bar{e}}^k \subseteq \mathcal{P}_{\bar{e}}^k \text{and }|\delta\mathcal{P}_{\bar{e}}^k| > 1 \text{ and } \\ \text{the SPaths in } \delta\mathcal{P}_{\bar{e}}^k \text{ are link disjoint}\} 
 \end{split}
\label{eq:dis-grp}
\end{equation}  
Note that two SPaths belonging to two different disjoint SPath groups $H_{\bar{e}}^i$ and $H_{\bar{e}}^j$ may share an SLink and an SPath can appear in multiple disjoint SPath groups. The set of all disjoint SPath groups for a VLink $\bar{e}$ is denoted by $\mathcal{H}_{\bar{e}}$. For instance, in \fig{subfig:nosplit}, $\mathcal{H}_{\bar{pq}} = \{\{AB-BC, AD-DC\}, \{AB-BC, AE-EC\}, \{AD-DC, AE-EC\}, \{AB-BC, AD-DC, AE-EC\}\}$, and in \fig{subfig:split-ndp-bsr}, $\mathcal{H}_{\bar{pq}} = \{\{AD-DC, AE-EC\}, \{AD-DB-BC, AE-EC\}\}$. 

\alg{alg:Compute-all} needs to enumerate all non-empty subsets of $\mathcal{H}_{\bar{e}}$ to find the optimal solution. Such enumeration is not scalable as the number of subsets of $\mathcal{H}_{\bar{e}}$ grows exponentially with the size of $\mathcal{H}_{\bar{e}}$. Hence, \alg{alg:Compute-all} applies a heuristic to obtain a smaller set of disjoint SPath groups $\mathbb{H}_{\bar{e}} \subseteq \mathcal{H}_{\bar{e}}$ that includes the most probable SPaths to be used as the splits of $\bar{e}$. The heuristic is motivated by the fact that longer SPaths allow only lower order modulation formats with a higher FEC that may require a large number of spectrum slices. In addition, longer SPaths often consist of more intermediate hops, thus increasing the total spectrum usage as per \eqn{obj}. To exclude such longer SPaths, our heuristic should construct $\mathbb{H}_{\bar{e}}$ with those disjoint SPath groups whose average total distances are small. However, doing so may bias the algorithm to include disjoint SPath groups with smaller number of SPaths for keeping the average distance low, which can reduce the benefits of splitting over multiple disjoint paths. \alg{alg:Compute-all} circumvents this issue by considering disjoint SPath groups of all sizes. From each set of disjoint SPath groups with size $i$, \alg{alg:Compute-all} selects the first $\sigma$ disjoint SPath groups ranked by the increasing average distance of the group (Line 6--9). The value of $\sigma$ is an input to \alg{alg:Compute-all} that can be used to keep the size of $\mathbb{H}_{\bar{e}}$ small.

After computing $\mathbb{H}_{\bar{e}}$, \alg{alg:Compute-all} enumerates all non-empty subsets of $\mathbb{H}_{\bar{e}}$ to assign data-rates to each disjoint SPath group in the subset such that each group provides dedicated protection as per BSR requirement for its data-rate. For each subset $\delta\mathbb{H}_{\bar{e}} \subseteq \mathbb{H}_{\bar{e}}$, \alg{alg:Compute-all} selects $|\delta\mathbb{H}_{\bar{e}}|$ data-rates such that the sum of these data-rates equals to demand $\bar{\beta}_{\bar{e}}$ (Line 10). These combinations of data-rates is represented by a multi-set $\mathbb{D}_{\delta\mathbb{H}_{\bar{e}}}= (\mathcal{D}, m_1)$, where $\mathcal{D}$ is the set of all data-rates, and $m_1:\mathcal{D} \rightarrow N$ is the number of times a data-rate in $\mathcal{D}$ appears in $\mathbb{D}_{\delta\mathbb{H}_{\bar{e}}}$. $\mathcal{M}(\mathbb{D}_{\delta\mathbb{H}_{\bar{e}}})$ represents all possible multi-sets of $\mathbb{D}_{\delta\mathbb{H}_{\bar{e}}}$.



Since none of $\delta\mathbb{H}_{\bar{e}}$ and $\mathbb{D}_{\delta\mathbb{H}_{\bar{e}}}$ are ordered sets, \alg{alg:Compute-all} needs to enumerate all permutations of data-rates from $\mathbb{D}_{\delta\mathbb{H}_{\bar{e}}}$ to assign data-rates in $\mathbb{D}_{\delta\mathbb{H}_{\bar{e}}}$ to $|\delta\mathbb{H}_{\bar{e}}|$ SPath groups. To do so, \alg{alg:Compute-all} generates all permutations of data-rates for each multi-set $\mathbb{D}_{\delta\mathbb{H}_{\bar{e}}} \in \mathcal{M}(\mathbb{D}_{\delta\mathbb{H}_{\bar{e}}})$, denoted by $\zeta(\mathbb{D}_{\delta\mathbb{H}_{\bar{e}}})$ (Line 12). For each of these permutations of data-rates $\mathrm{D}_{\delta\mathbb{H}_{\bar{e}}} \in \zeta(\mathbb{D}_{\delta\mathbb{H}_{\bar{e}}})$, \alg{alg:Compute-all} assigns a data-rate $d_{H_{\bar{e}}}$ from $\mathrm{D}_{\delta\mathbb{H}_{\bar{e}}}$ to a disjoint SPath group $H_{\bar{e}} \in \delta\mathbb{H}_{\bar{e}}$ in the same order (Line 15). For each disjoint SPath group $H_{\bar{e}} \in \delta\mathbb{H}_{\bar{e}}$ with its assigned data-rate $d_{H_{\bar{e}}}$, the algorithm computes the data-rate $d_{p_{H_{\bar{e}}}}$ of an SPath $p \in H_{\bar{e}}$ using~\eqn{eq:heu-eq7} that takes into account the number of disjoint SPaths in $H_{\bar{e}}$ and $BSR_{\bar{e}}$ of the VLink (Line 16). $d_{p_{H_{\bar{e}}}}$ is computed in a way such that the SPaths in $H_{\bar{e}}$ protect $BSR_{\bar{e}}$ fraction of $d_{H_{\bar{e}}}$ under any single SLink failure. However, when $BSR_{\bar{e}}$ is low, protection requirement is not enough to provide the assigned data-rate $d_{H_{\bar{e}}}$. In this case, data-rate $d_{p_{H_{\bar{e}}}}$ for an SPath is computed by equally dividing $d_{H_{\bar{e}}}$ along each disjoint SPath in $H_{\bar{e}}$. Note that each SPath in $H_{\bar{e}}$ will have an equal data-rate $d_{p_{H_{\bar{e}}}}$.
\begin{align}
d_{p_{H_{\bar{e}}}} = max (\frac{d_{H_{\bar{e}}}\times BSR_{\bar{e}}}{100\times(|H_{\bar{e}}|-1)}, \frac{d_{H_{\bar{e}}}}{|H_{\bar{e}}|})
\label{eq:heu-eq7}
\end{align}
Since an SPath $p$ can appear in multiple disjoint SPath groups in $\delta\mathbb{H}_{\bar{e}}$, a consolidation step is introduced to compute the total data-rate assigned to $p$ (denoted by $d_{p_{\delta\mathbb{H}_{\bar{e}}}}$) using~\eqn{eq:heu-eq8} (Line 17). Since the possible data-rates in the reach table are discrete values, the ceiling function returns the nearest rounded up data-rate after the summation in~\eqn{eq:heu-eq8}.
\begin{align}
d_{p_{\delta\mathbb{H}_{\bar{e}}}} = \ceil [\bigg]{\sum_{\forall H_{\bar{e}} \in \delta\mathbb{H}_{\bar{e}}} d_{p_{H_{\bar{e}}}}}
\label{eq:heu-eq8}
\end{align}

After the consolidation step, we get a set of distinct (not necessarily disjoint) SPaths $P_{\delta\mathbb{H}_{\bar{e}}}$ and their assigned data-rates for a particular $\delta\mathbb{H}_{\bar{e}}$ and $\mathbb{D}_{\delta\mathbb{H}_{\bar{e}}}$. However, an SPath $p_{\delta\mathbb{H}_{\bar{e}}} \in P_{\delta\mathbb{H}_{\bar{e}}}$ may split its assigned data-rate $d_{p_{\delta\mathbb{H}_{\bar{e}}}}$ into smaller data-rates either to ensure spectrum contiguity or to minimize the number of slices. To enumerate these possibilities, \alg{alg:Compute-all} generates the set of all possible multi-sets $\mathcal{M}({\mathbb{P}_{\delta\mathbb{H}_{\bar{e}}}})$ out of the set $P_{\delta\mathbb{H}_{\bar{e}}}$ (Line 18). A multi-set $\mathbb{P}_{\delta\mathbb{H}_{\bar{e}}} \in \mathcal{M}({\mathbb{P}_{\delta\mathbb{H}_{\bar{e}}}})$ 
is defined as $\mathbb{P}_{\delta\mathbb{H}_{\bar{e}}} = (P_{\delta\mathbb{H}_{\bar{e}}}, m_2)$, where, $m_2:P_{\delta\mathbb{H}_{\bar{e}}} \rightarrow N$ is the number of times an SPath in $P_{\delta\mathbb{H}_{\bar{e}}}$ appears in $\mathbb{P}_{\delta\mathbb{H}_{\bar{e}}}$. For each multi-set $\mathbb{P}_{\delta\mathbb{H}_{\bar{e}}}$, \alg{alg:Compute-all} assigns data-rates to each instance of SPath in $\mathbb{P}_{\delta\mathbb{H}_{\bar{e}}}$. This is trivial for an SPath that appears once in $\mathbb{P}_{\delta\mathbb{H}_{\bar{e}}}$ ($m_2(p)=1$). For an SPath with $m_2(p)>1$, \alg{alg:Compute-all} distributes the assigned data-rate $d_{p_{\delta\mathbb{H}_{\bar{e}}}}$ into $m_2(p)$ splits by generating permutations of multi-sets of data-rates of size $m_2(p)$ (Line 21--23). To get the set of all data-rate permutations $\mathcal{M}(\mathrm{D}_{\mathbb{P}_{\delta\mathbb{H}_{\bar{e}}}})$ for $\mathbb{P}_{\delta\mathbb{H}_{\bar{e}}}$, \alg{alg:Compute-all} considers all possible ways to combine the data-rate permutations of each distinct SPath in $P_{\delta\mathbb{H}_{\bar{e}}}$ (Line 24).



Once we have an SPath multi-set $\mathbb{P}_{\delta\mathbb{H}_{\bar{e}}}$ and its data-rate permutation $\mathrm{D}_{\mathbb{P}_{\delta\mathbb{H}_{\bar{e}}}}$, \alg{alg:Compute-all} invokes MDP procedure to find feasible transmission configuration and spectrum slice allocation with the least slice requirement (Line 26). MDP procedure is adapted from~\cite{shahriarinfocom19} that first selects a transmission configuration to achieve a data-rate in $\mathrm{D}_{\mathbb{P}_{\delta\mathbb{H}_{\bar{e}}}}$ along an SPath in $\mathbb{P}_{\delta\mathbb{H}_{\bar{e}}}$ and allocates the slices required by the transmission configuration using First-fit~\cite{chatterjee2015routing}. Among all the feasible solutions returned by MDP procedure, \alg{alg:Compute-all} selects the one that minimizes \eqn{obj}.

\subsubsection{Running Time Analysis}\label{sec:vn-heuristic-time}
\alg{alg:Compute-all} explores all subsets of $\mathbb{H}_{\bar{e}}$ yielding $2^{|\mathbb{H}_{\bar{e}}|}$ possibilities. For each subset $\delta\mathbb{H}_{\bar{e}} \subseteq \mathbb{H}_{\bar{e}}$, \alg{alg:Compute-all} explores $\binom{|\mathcal{D}|+|\delta\mathbb{H}_{\bar{e}}|-1}{|\delta\mathbb{H}_{\bar{e}}|}$ data-rate multi-sets. The number of permutations of a multiset $\mathbb{D}_{\delta\mathbb{H}_{\bar{e}}}$ of cardinality $|\delta\mathbb{H}_{\bar{e}}|$ is given by $\frac{|\delta\mathbb{H}_{\bar{e}}|!}{\Pi_{d_j \in \mathcal{D}} m_{1}(d_j)!}$\cite{brualdi1992introductory}. This results in $\frac{(|\mathcal{D}|+|\delta\mathbb{H}_{\bar{e}}|-1)!}{(|\mathcal{D}|-1)! \times \Pi_{d_j \in \mathcal{D}} m_{1}(d_j)!}$ enumerations. Then \alg{alg:Compute-all} enumerates $\mathcal{M}({\mathbb{P}_{\delta\mathbb{H}_{\bar{e}}}})$ multi-sets of SPaths based on an SPath set $P_{\delta\mathbb{H}_{\bar{e}}}$, and this enumeration has an upper bound $\binom{q-1}{\frac{q}{2}} = \frac{\Pi_{j=\ceil[]{\frac{q}{2}}+1}^{q-1} j!}{\ceil[]{\frac{q}{2}}!}$. 
Assuming that for each SPath $p \in P_{\delta\mathbb{H}_{\bar{e}}}$, \alg{alg:Compute-all} can have a data-rate multi-set $\mathbb{D}_{p} = (\mathcal{D}, m_3)$, where $m_3:\mathcal{D} \rightarrow N$ is the frequency of a data-rate in $\mathbb{D}_{p}$, the number of data-rate permutations in $\mathcal{M}(\mathrm{D}_{\mathbb{P}_{\delta\mathbb{H}_{\bar{e}}}})$ is ${\Pi_{i=1}^{|P_{\delta\mathbb{H}_{\bar{e}}}|}}\frac{(|\mathcal{D}|+m_2(p_i)-1)!}{(|\mathcal{D}|-1)! \times \Pi_{d_j \in \mathcal{D}} m_{3}(d_j)!}$. Since MDP's time complexity is $\frac{q!}{\Pi_{p_j \in {\mathbb{P}_{\delta\mathbb{H}_{\bar{e}}}}} m_{2}(p_j)!}$ as per~\cite{shahriarinfocom19}, the running time of \alg{alg:Compute-all} becomes $2^{|\mathbb{H}_{\bar{e}}|} \times \frac{(|\mathcal{D}|+|\delta\mathbb{H}_{\bar{e}}|-1)!}{(|\mathcal{D}|-1)! \times \Pi_{d_j \in \mathcal{D}} m_{1}(d_j)!} \times \frac{\Pi_{j=\ceil[]{\frac{q}{2}}+1}^{q-1} j!}{\ceil[]{\frac{q}{2}}!} \times {\Pi_{i=1}^{|P_{\delta\mathbb{H}_{\bar{e}}}|}}\frac{(|\mathcal{D}|+m_2(p_i)-1)!}{(|\mathcal{D}|-1)! \times \Pi_{d_j \in \mathcal{D}} m_{3}(d_j)!} \times \frac{q!}{\Pi_{p_j \in \mathbb{P}_{\delta\mathbb{H}_{\bar{e}}}} m_{2}(p_j)!}$. As \alg{alg:Compute-all} keeps the size of $\mathbb{H}_{\bar{e}}$ small, the running time is dominated by the latter part. However, typical values of $|\mathcal{D}|$ and $q$ are small and we apply several pruning techniques to improve the running time.\vspace{-1em}

\SetKwProg{Fn}{function}{}{}
{
	\linespread{0.75}
	\small
	\begin{algorithm}[tbh!]
		\DontPrintSemicolon
		\caption{Find the embedding of a single VLink}
		\label{alg:Compute-all}
		\Fn(){FindEmbedding($G, \bar{e}, \mathcal{P}_{\bar{e}}^k, \mathcal{T}_{\bar{e}}, \sigma$)}{
            \While{A new $H_{\bar{e}}$ exists}
            {
                Compute a new disjoint path group $H_{\bar{e}}$ using \eqn{eq:dis-grp}\\
                $\mathcal{H}_{\bar{e}} \gets \mathcal{H}_{\bar{e}} \cup H_{\bar{e}}$\\
            }
            \For{$i = 2$ to $\max_{\forall H_{\bar{e}} \in \mathcal{H}_{\bar{e}}} |H_{\bar{e}}|$}{
                $Z_{\bar{e}}^i \gets \{H_{\bar{e}}| H_{\bar{e}} \in \mathcal{H}_{\bar{e}} \wedge |H_{\bar{e}}| = i\}$\\
                $\mathcal{Z}_{\bar{e}}^i \gets$ First $\sigma$\text{ }$H_{\bar{e}} \in Z_{\bar{e}}^i$ with smallest avg. dist. \\
                $\mathbb{H}_{\bar{e}} \gets \mathbb{H}_{\bar{e}} \cup \mathcal{Z}_{\bar{e}}^i$\\
            }
			\ForEach{$\delta\mathbb{H}_{\bar{e}} \subseteq \mathbb{H}_{\bar{e}}$}{
				$\mathcal{M}(\mathbb{D}_{\delta\mathbb{H}_{\bar{e}}}) \gets$ All-Multi-Set($\mathcal{D}, |\delta\mathbb{H}_{\bar{e}}|$) s.t. $\sum_{d \in \mathbb{D}_{\delta\mathbb{H}_{\bar{e}}}} d =  \bar{\beta}_{\bar{e}}$\\
				\ForEach{$\mathbb{D}_{\delta\mathbb{H}_{\bar{e}}} \in \mathcal{M}(\mathbb{D}_{\delta\mathbb{H}_{\bar{e}}})$}{
					$\zeta(\mathbb{D}_{\delta\mathbb{H}_{\bar{e}}}) \gets$ All-Permutation($\mathbb{D}_{\delta\mathbb{H}_{\bar{e}}}$)\\
					\ForEach{$\mathrm{D}_{\delta\mathbb{H}_{\bar{e}}} \in \zeta(\mathbb{D}_{\delta\mathbb{H}_{\bar{e}}})$}
					{
                        \ForEach{$H_{\bar{e}} \in \delta\mathbb{H}_{\bar{e}}$}{
                        $d_{H_{\bar{e}}} \gets \mathrm{D}_{\delta\mathbb{H}_{\bar{e}}}$[index of $H_{\bar{e}} \in \delta\mathbb{H}_{\bar{e}}$]\\
                        Compute $d_{p_{H_{\bar{e}}}}$ using \eqn{eq:heu-eq7} \\
                        }
                        Compute $d_{p_{\delta\mathbb{H}_{\bar{e}}}}$ using \eqn{eq:heu-eq8}\\
                        $\mathcal{M}({\mathbb{P}_{\delta\mathbb{H}_{\bar{e}}}}) \gets$ All-Multi-Set($P_{\delta\mathbb{H}_{\bar{e}}}, q$)\\
                        \ForEach{$\mathbb{P}_{\delta\mathbb{H}_{\bar{e}}} \in \mathcal{M}({\mathbb{P}_{\delta\mathbb{H}_{\bar{e}}}})$}
                        {
                        \ForEach{$p \in \mathbb{P}_{\delta\mathbb{H}_{\bar{e}}}$}
                            {
            				$\mathcal{M}(\mathbb{D}_{p}) \gets$ All-Multi-Set($\mathcal{D}, m_2(p)$) s.t. $\sum_{d \in \mathbb{D}_{p}} d =  d_{p_{\delta\mathbb{H}_{\bar{e}}}}$\\
            				\ForEach{$\mathbb{D}_{p} \in \mathcal{M}(\mathbb{D}_{p})$}{
            					$\zeta(\mathbb{D}_{p}) \gets$ All-Permutation($\mathbb{D}_{p}$)\\
                            }
                           } $\mathcal{M}(\mathrm{D}_{\mathbb{P}_{\delta\mathbb{H}_{\bar{e}}}}) \gets \zeta(\mathbb{D}_{p_1}) \times \zeta(\mathbb{D}_{p_2}) ... \times \zeta(\mathbb{D}_{p_{|P_{\delta\mathbb{H}_{\bar{e}}}|}})$
						  
						\ForEach{$\mathrm{D}_{\mathbb{P}_{\delta\mathbb{H}_{\bar{e}}}} \in \mathcal{M}(\mathrm{D}_{\mathbb{P}_{\delta\mathbb{H}_{\bar{e}}}})$}{
						$<n, \mathbb{P}, \mathbb{T}, \mathbb{S}> \gets$ MDP($\mathbb{P}_{\delta\mathbb{H}_{\bar{e}}}, \mathrm{D}_{\mathbb{P}_{\delta\mathbb{H}_{\bar{e}}}}$)
						}
                            }
                           }
                    }
                    }

		 Find $\mathbb{P}^{opt}$, $\mathbb{T}^{opt}$ and $\mathbb{S}^{opt}$ that minimizes \eqn{obj}\\
		$\mathcal{I}_{\bar{e}} \gets <\mathbb{P}^{opt}, \mathbb{T}^{opt}, \mathbb{S}^{opt}>$\\
			\Return $\mathcal{I}_{\bar{e}}$
		}
	\end{algorithm}
	\vspace{-0.5em}
}

\vspace{-1em}
\section{Evaluation}\label{sec:eval}\vspace{-1em}




\subsection{Simulation Setup} \label{subsec:expsetup}
We implement the ILP formulation from \sect{sec:ilp} using IBM ILOG CPLEX C++ libraries. We compare the ILP's solution with a C++ implementation of the heuristic presented in \sect{sec:heuristic}. Simulations are run on a machine with 8$\times$10-core Intel Xeon E7-8870 2.40GHz processors and $1$TB RAM. 

\textbf{SN characteristic:} We use Nobel Germany (17 nodes and 26 links) and Germany50 (50 nodes and 88 links) networks from SNDlib repository~\cite{SNDlib} as the SNs for small and large scale simulations, respectively. $k=25$ and $k=20$ shortest paths between all pairs of SNodes are pre-computed as inputs to our simulation for Nobel Germany and Germany50, respectively. Each SLink in the EON has spectrum bandwidth of 600GHz and 4THz for small and large scale simulations, respectively. We also vary the Link-to-Node Ratio (LNR) of Nobel Germany SN by adding or removing some links in the original topology (original LNR is 1.53), to show the impact of different level of SN densities on the performance metrics.

\textbf{VN generation:} We synthetically generate VNs with different properties. For small-scale scenario, we restrict VN sizes to 4 VNodes and 5 VLinks to limit the ILP's complexity. For large-scale, we generate VNs with 20 VNodes and 30 VLinks. VLink demands in both small-scale and large-scale scenarios are varied between 100Gbps to 1Tbps with possible values as multiples of 100Gbps. Node mapping of VNodes in SNodes is randomly chosen. In our simulations, we vary $BSR$ from $0\%$ to $100\%$, where $BSR=0$ means no protection and $BSR=100\%$ provides full dedicated protection. We generate $5$ and $20$ different VNs with similar total bandwidth demands for small- and large-scale simulations, respectively, and take the mean of the performance metrics over those VNs. 
\vspace{-1.5em}
\subsection{Compared Variants} \label{subsec:variants}
In our evaluation, we quantify the impact of the two key flexibilities introduced by EONs, namely adaptability in transponder configuration (i.e., variable FEC and modulation) and in spectrum allocation. To do so, we compare the variants listed in Table \ref{tab:comparison-summary}. \textit{Fix-RT} considers fixed-grid spectrum allocation with only one tuple for modulation format, FEC overhead, and reach for each fixed-grid data rate in $\mathcal{R}$ and serves as the baseline for our evaluation. In contrast, the other two  variants, \textit{Fix-AT} and \textit{Flex-AT}, exploit a transponder's capability to choose from a variety of modulation formats and/or FEC overheads for each data rate in $\mathcal{R}$, leading to different reaches. These two variants differ in terms of spectrum slice allocation granularity (\ie 50GHz and 12.5GHz for fixed-grid and flex-grid, respectively) and the total number of possible data rates (\ie 100G, 200G, 400G for fixed-grid and 100G, 150G, 200G, 250G, 300G, 400G, 500G, 600G, and 800G for flex-grid). Spectrum occupation and data rates of each variant are chosen based on current industry standards~\cite{itu-t}. We fix the value of maximum number of splits to $q=8$ for all cases.
\begin{table}  [tbh!]
	\centering
	\vspace{-1.5em}
	\caption{Compared Variants}
	\vspace{0em}
	\begin{tabular} {|m{3.0cm}|m{1.8cm}|m{2.6cm}|}\hline
    {\textbf{Variant}} &\textbf{Transponder Flexibility} & \textbf{Flexible Spectrum Allocation} \\ \hline
    \hline
	Fix-RT & No & No \\ \hline
    Fix-AT & Yes & No\\ \hline
	Flex-AT & Yes & Yes \\ \hline	

	\end{tabular}
	\label{tab:comparison-summary}
	\vspace{-1em}
\end{table}
\subsection{Performance Metrics}\label{subsec:performance-metrics}


\paragraph{Spectrum Slice Usage (SSU)} The total number of spectrum slices required to embed a VN. This metric is computed using only the first term of Eqn. \eqn{obj}.

\paragraph{Protection overhead} Ratio between total bandwidth allocated  of a VLink (on all different splits) and the actual bandwidth demand of the VLink. It is a measure of the required extra bandwidth for providing protection.

\paragraph{Max. no. of disjoint paths} Average of the maximum number of disjoint SPaths used to embed a VLink.

\paragraph{Max. no. of splits} Average of the maximum number of splits used to embed a VLink.



\begin{figure*}[!h]
	\centering
	\subfigure[Impact on SSU]{\includegraphics[trim=1cm 1cm 1.15cm 1.5cm,width=0.245\textwidth]{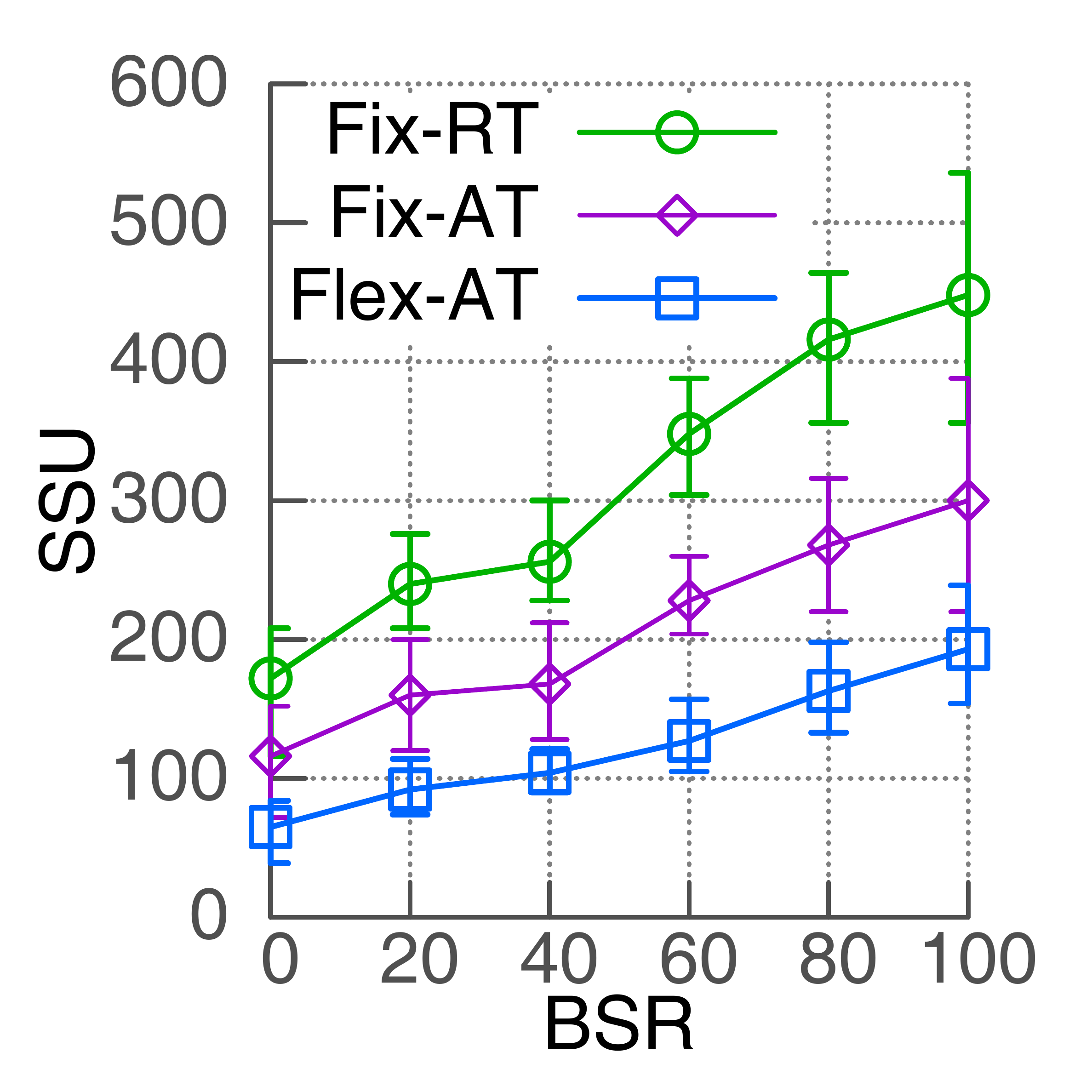}\label{fig_c:acc}}
	\subfigure[Impact on protection overhead]{\includegraphics[trim=1cm 1cm 1.15cm 1.5cm,width=0.245\textwidth]{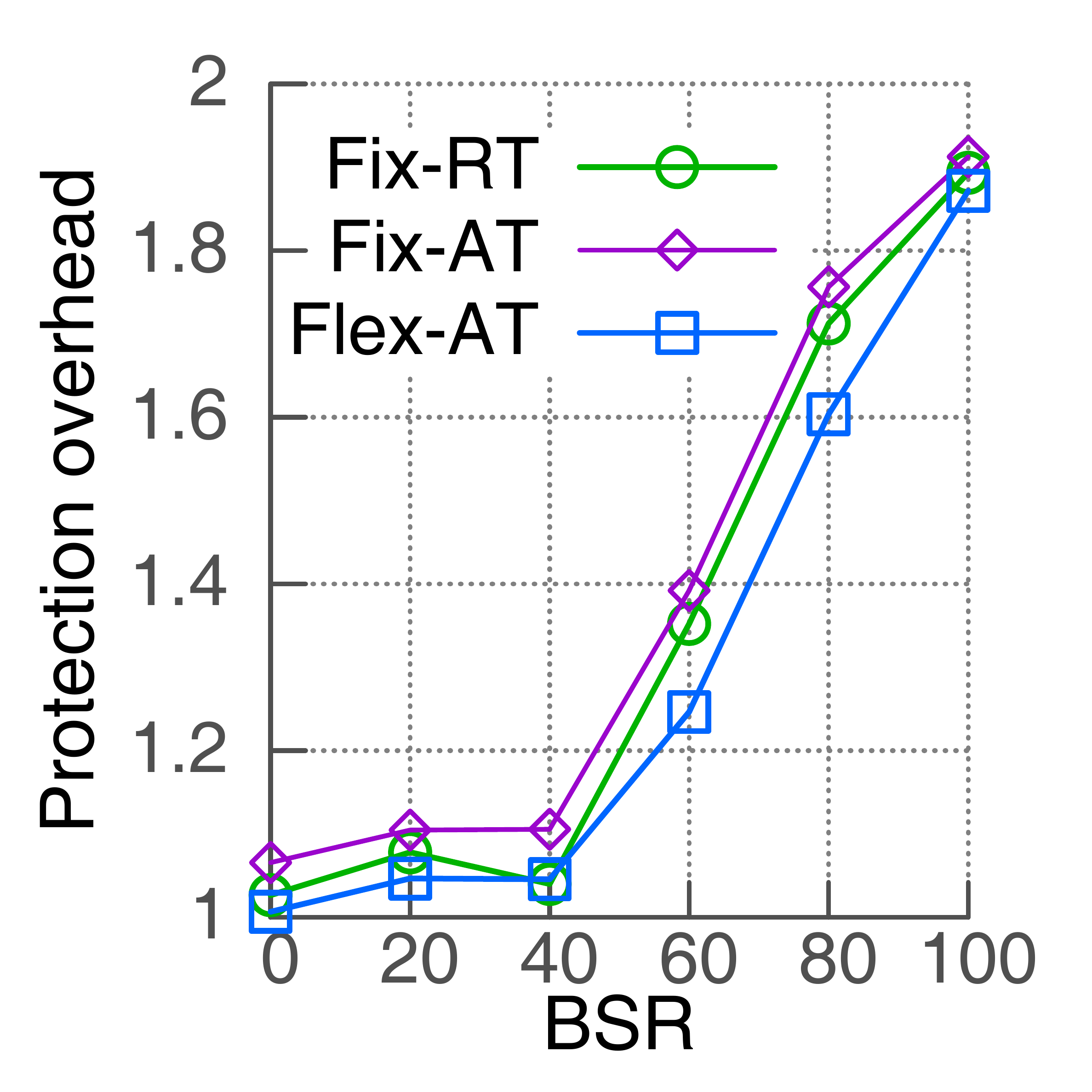}\label{fig_c:bacc}}
    \subfigure[Impact on number of disjoint paths]{\includegraphics[trim=1cm 1cm 1.15cm 1.5cm,width=0.245\textwidth]{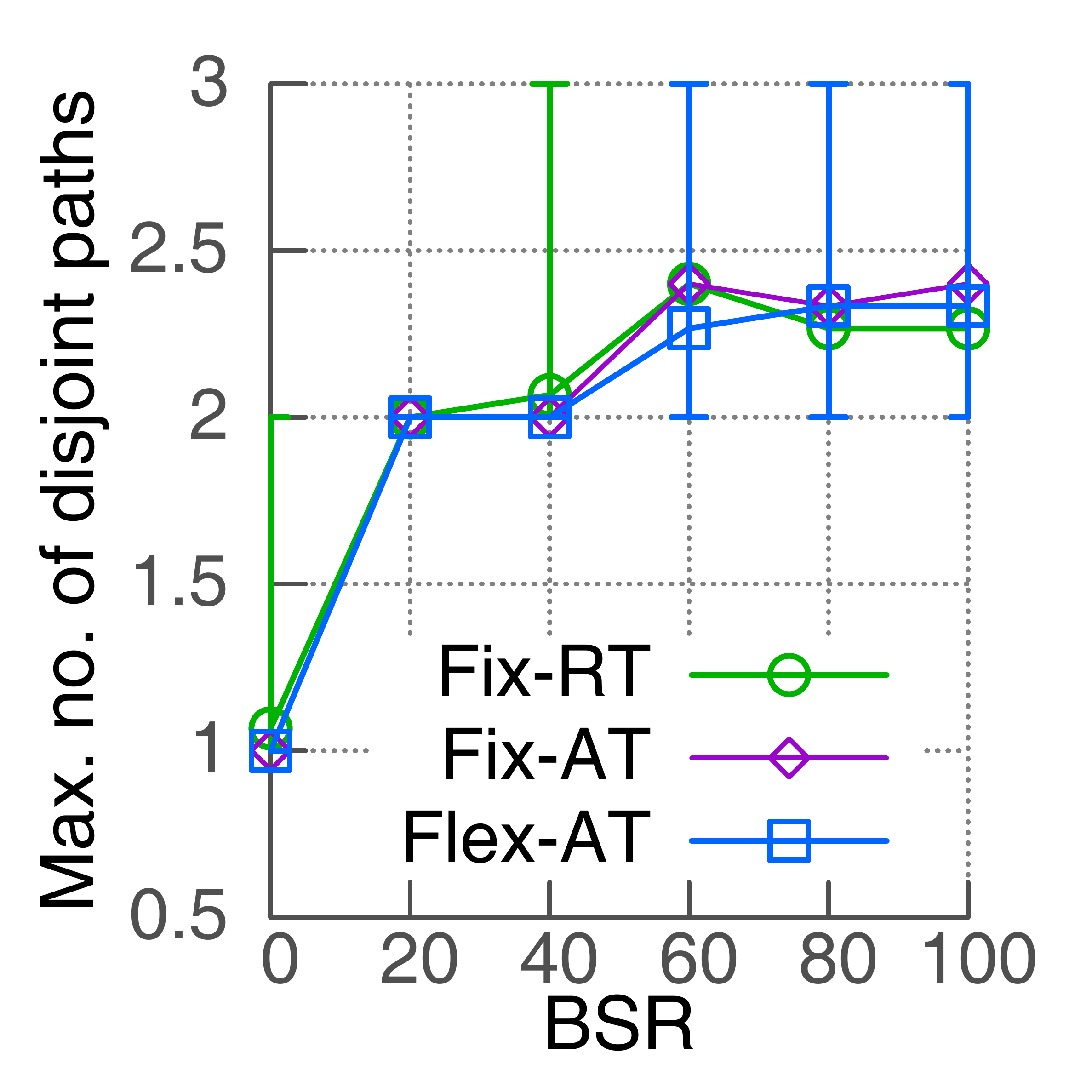}\label{fig_c:dist}}
    \subfigure[Impact on number of splits]{\includegraphics[trim=1cm 1cm 1.15cm 1.5cm,width=0.245\textwidth]{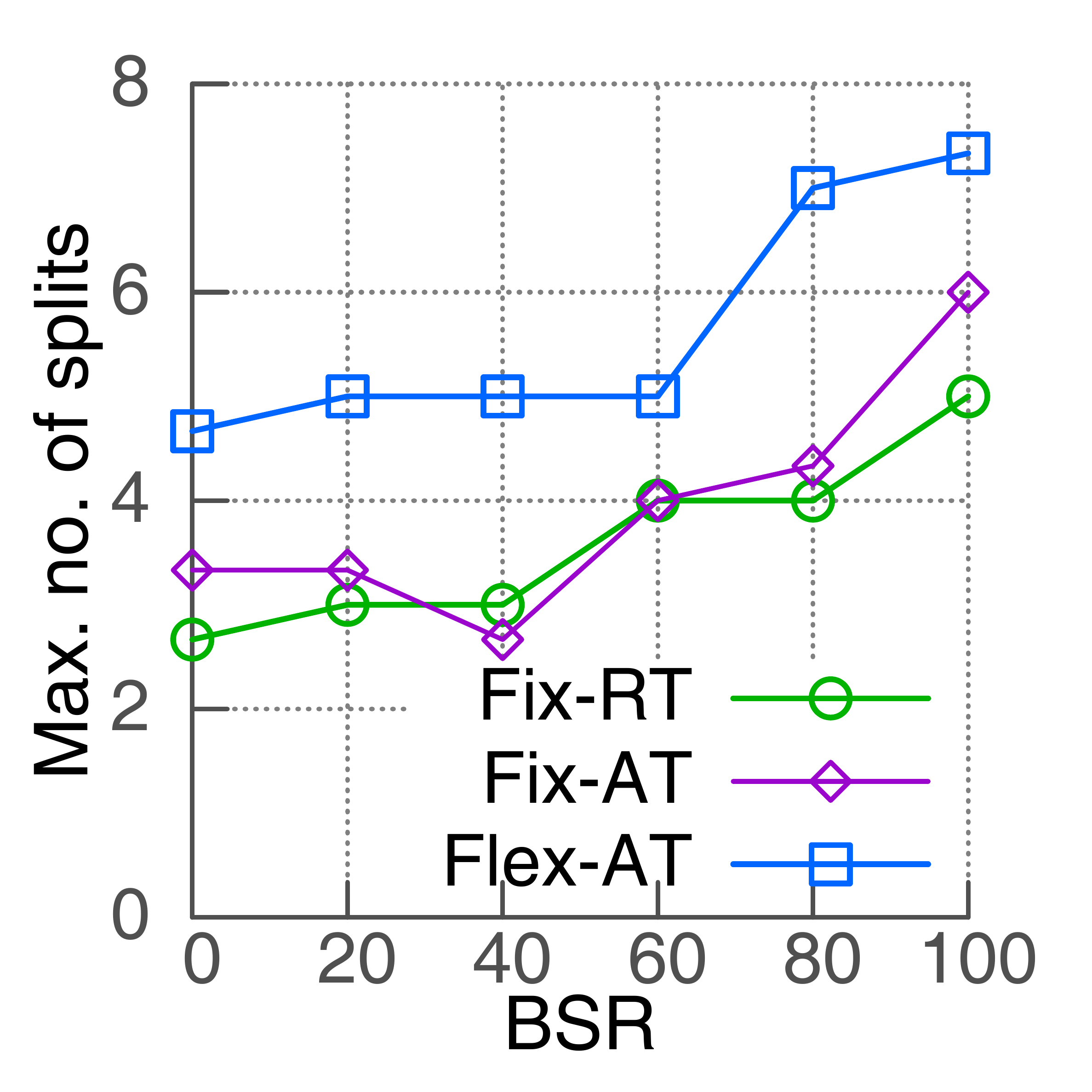}\label{fig_c:split}}
    \vspace{-1.0em}
	\caption{Impact of varying BSR on performance metrics in small Germany SN}
	\vspace{-1.0em}
	\label{fig_c:acratio}
\end{figure*}

\begin{figure}[!h]
	\centering
	\subfigure[Impact of SN density]{\includegraphics[trim=0cm 0.8cm 0.0cm 1.5cm,width=0.24\textwidth]{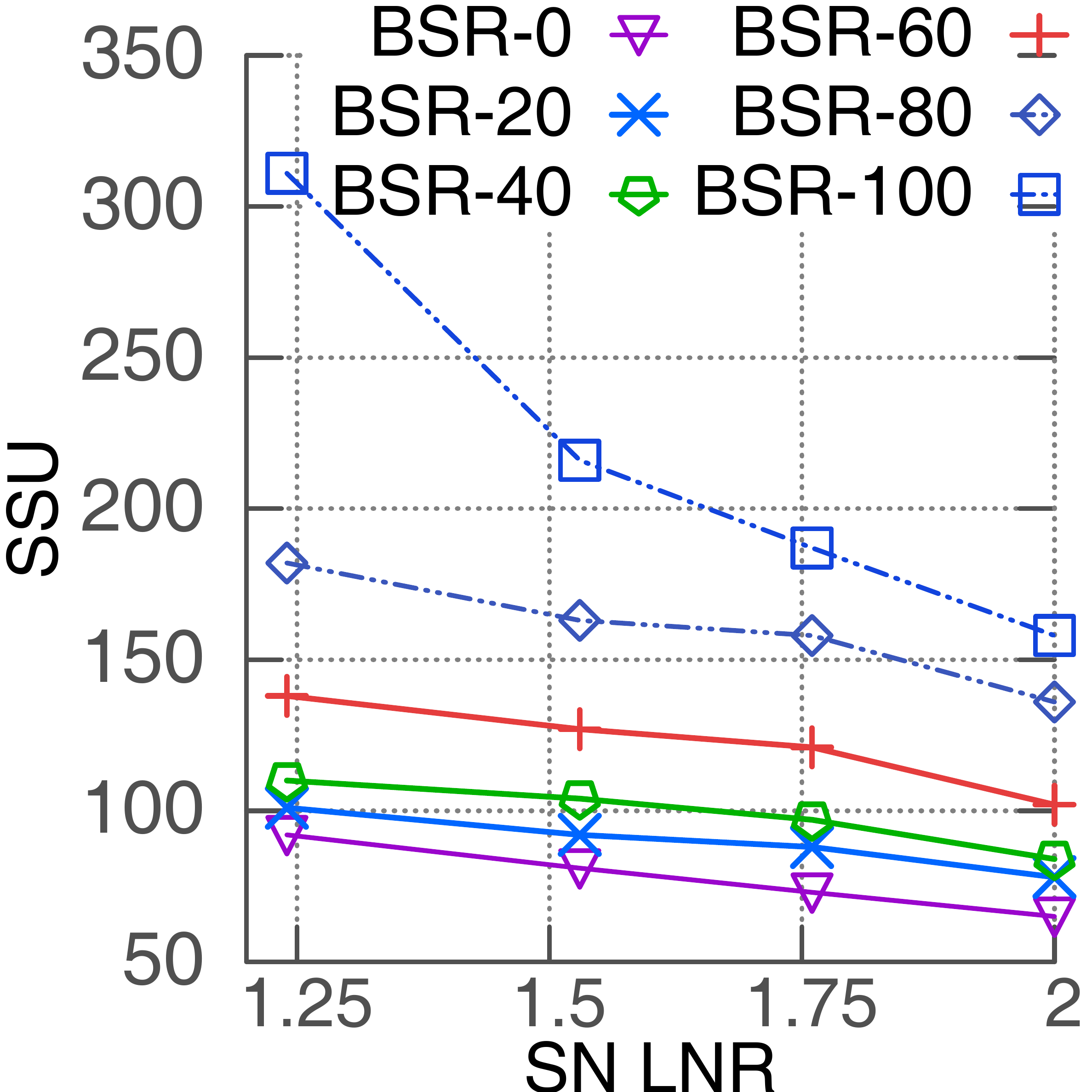}\label{fig_c:bacc_sn}}
	\subfigure[Split vs noSplit on same path]{\includegraphics[trim=0cm 0.8cm 0.5cm 1.5cm,width=0.24\textwidth]{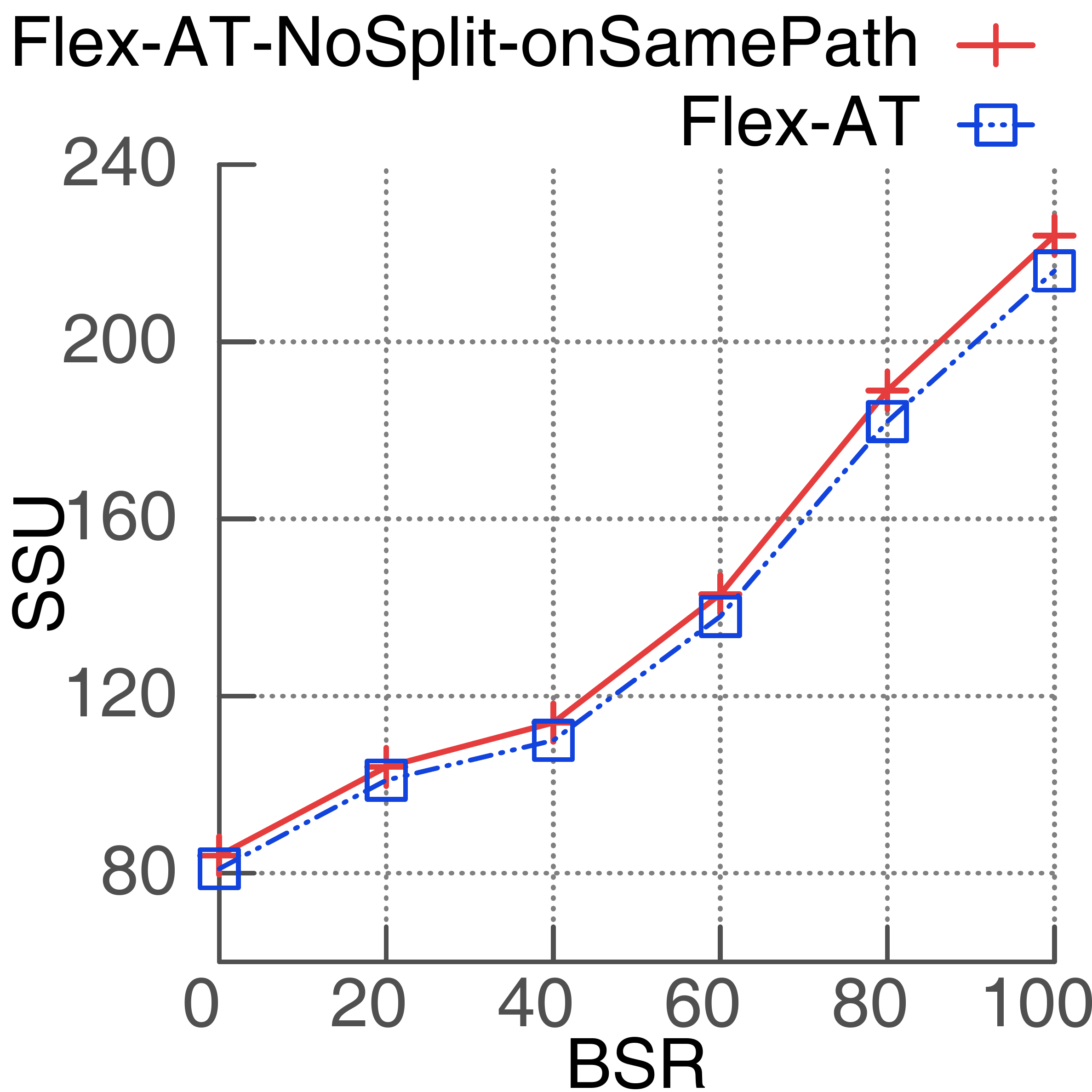}\label{fig_c:comp}}
	\caption{Analysis of our reliability model}
	\label{fig_c:lnr}
\end{figure}
\vspace{-0.5em}
\subsection{Results and Discussion}\label{sec_c:result}\vspace{-0.5em}
\subsubsection{Small-scale results with ILP formulation} \fig{fig_c:acratio} presents the impact of varying BSR on all performance metrics, and for all the compared variants, in Nobel Germany SN. Specifically, \fig{fig_c:acc} shows that SSU is the highest for the baseline variant, \textit{Fix-RT}, since it does not allow any flexibility in tuning transmission parameters or spectrum allocation. In contrast, \textit{Fix-AT} uses $30$\% less spectrum resources on average compared to \textit{Fix-RT} by only using tunable transponders with coarse-grained spectrum allocation. By combining flexible spectrum allocation with transponder tunability, \textit{Flex-AT} saves on average $57$\% and $27$\% spectrum compared to \textit{Fix-RT} and \textit{Fix-AT}, respectively. Lastly, \textit{Fix-RT} was infeasible for some problem instances at BSR $>80$ due to its inflexibility in transmission configuration tuning and spectrum allocation. 



\fig{fig_c:acc} also shows that SSU increases for increasing BSR in all considered cases. However, SSU rises very slowly (remaining within 10\% of the originally requested bandwidth\footnote{10\% additional bandwidth is due to the lack of fine-grained data rates that results in a small amount of over-provisioning (\eg 350Gbps is needed for BSR=40\% and demand=800Gbps, as 320Gbps is not a valid data rate).}) when BSR $\leq$ 40\%, while it steeply increases for  BSR $>$ 40\%. This can be explained by observing how the protection overhead (see plot in \fig{fig_c:bacc}) behaves for BSR up to 40\%. This behavior of the protection overhead for BSR $\leq$ 40\% confirms the intuition, already discussed in  \sect{sec:intro}, that low BSR can be guaranteed with low or no additional bandwidth if at least two disjoint SPaths can be used to map the splits of a VLink. In contrast, for BSR $>$ 40\%, more than two disjoint SPaths are needed for each VLink to reduce protection overhead. However, in our SN (Nobel Germany), either more than two disjoint SPaths do not exist for some pairs of SNodes, or the third and higher order disjoint SPaths become too long due to the sparse connectivity of the SN. Since long SPaths can support only low modulation formats, and thus require high number of spectrum slices, they are very unlikely to be selected in the optimal solution. 
This can be verified in \fig{fig_c:dist}, where it can be seen that the max. no. of disjoint SPaths used to embed a VLink is slightly larger than two for BSR $>$ 40\%, thus increasing protection overhead and, eventually, SSU. 

\fig{fig_c:split} shows that the max. no. of splits of a VLink is always larger than the max. no. of disjoint SPaths of a VLink, implying that there are multiple splits on the same SPath. Note also that \fig{fig_c:split} shows that \textit{Flex-AT} uses the highest number of splits to minimize the SSU as shown in \fig{fig_c:acc}. This is due to \textit{Flex-AT}'s smaller granularity in spectrum allocation that allows to use shorter SPaths even if the spectrum resources on these SPaths are fragmented.

To demonstrate how SN connectivity impacts spectral efficiency, \fig{fig_c:bacc_sn} shows SSU for \textit{Flex-AT} with different BSR requirements by varying LNR of Nobel Germany SN. \fig{fig_c:lnr} shows that SSUs decrease with increasing SN LNR even for the no protection case (0\% BSR). This stems from the fact that an SN with a larger LNR reduces the lengths and the number of intermediate hops of the SPaths between the same pair of SNodes, thus facilitating the use of more spectrum efficient transmission configurations along fewer hops. However, the gain in spectrum saving by increasing SN LNR is much higher for high BSR requirements. The additional gain is due to the use of three or more disjoint SPaths for mapping a VLink, thanks to the higher path diversity of a densely connected SN. 

Finally, \fig{fig_c:comp} demonstrates the benefit of our splitting model that enables \textit{Flex-AT} to save 3\%--4\% SSU compared to a case in which splitting is allowed over different SPaths, but not over the same SPath on different spectrum segments (\textit{Flex-AT-NoSplit-onSamePath}, e.g., the splitting model of \cite{goscien2016survivable}). Note that \textit{NoSplit-onSamePath} model renders problem instances with high BSR in \textit{Fix-RT} and \textit{Fix-AT} cases infeasible as spectrum fragmentation prohibits only one split per feasible SPath to occupy the required spectrum. In contrast, our model finds solutions in those cases by increasing number of splits per SPath even in fragmented situations as shown in \fig{fig_c:acc} and \fig{fig_c:split}. It is worth noting that our splitting model does not require any additional hardware investment, so the savings shown in \fig{fig_c:comp} are achieved at no additional cost.     

\begin{figure}[!h]
	\centering
	\subfigure[Small scale]{\includegraphics[trim=1.1cm 1cm 1.75cm 2.5cm,width=0.24\textwidth]{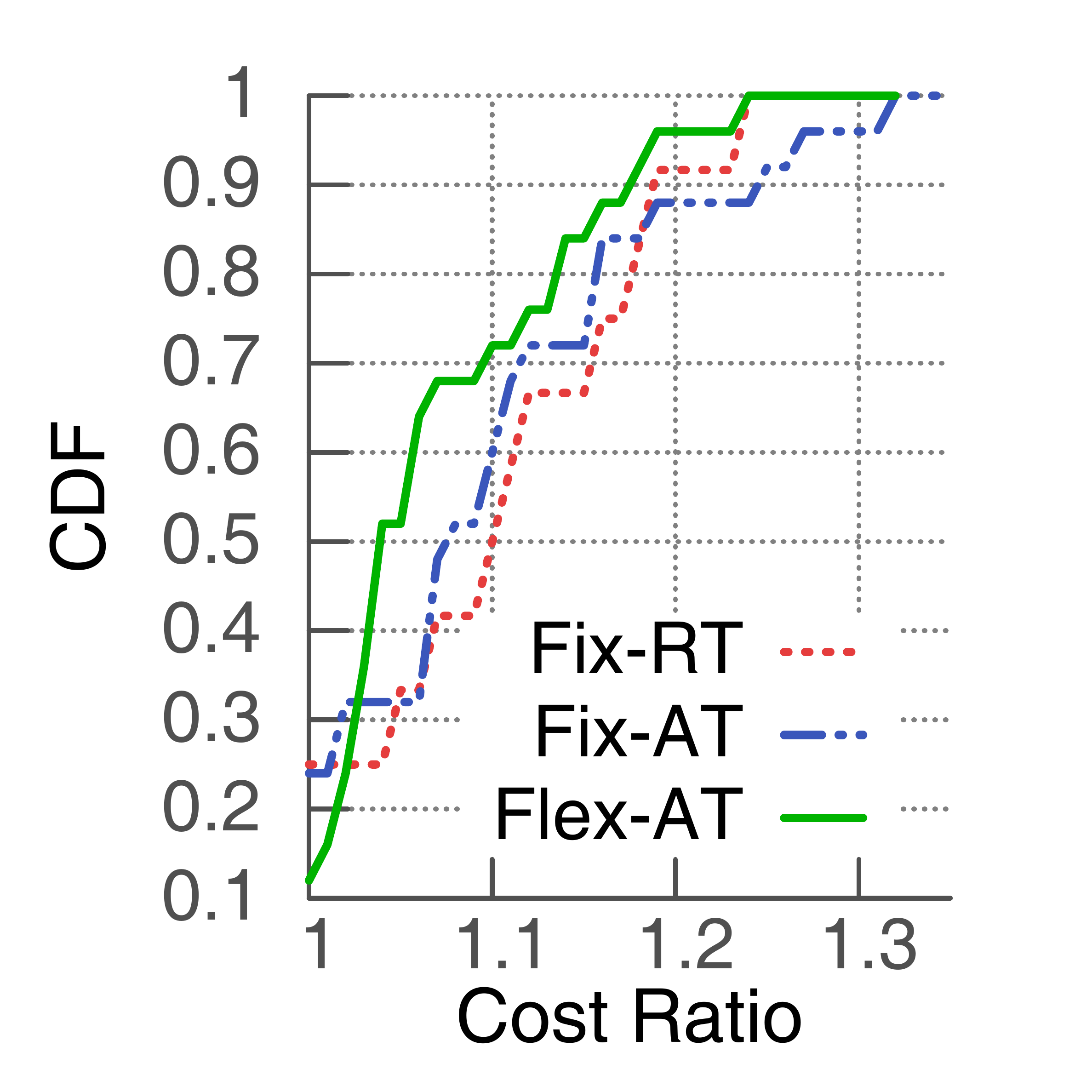}\label{fig_c:cdf}}
	\subfigure[Large scale]{\includegraphics[trim=2cm 1.25cm 1.15cm 2.5cm,width=0.24\textwidth]{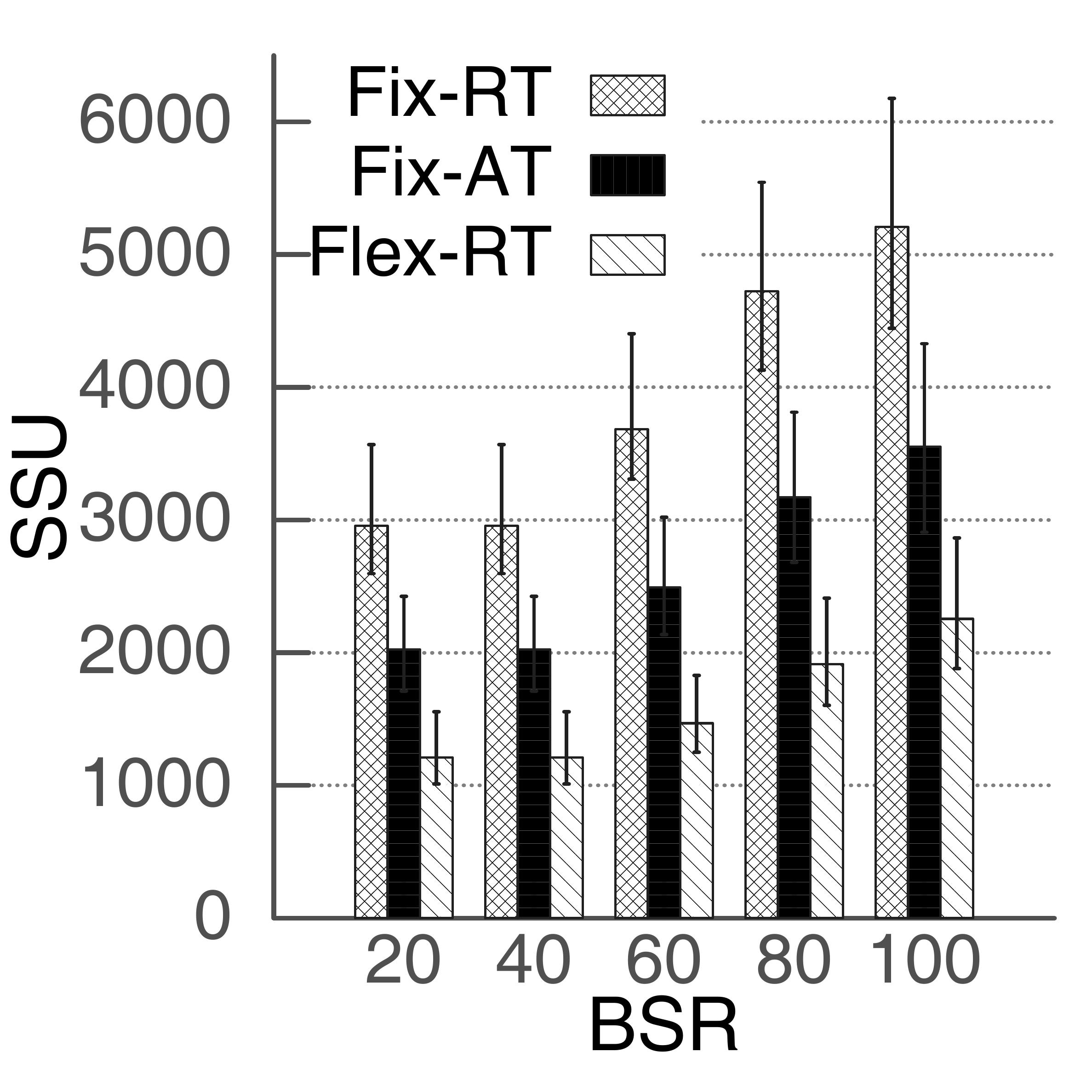}\label{fig_c:lar_cost}}
	\caption{Performance of our Heuristic Algorithm}
	\vspace{0em}
	\label{fig_c:large}
\end{figure}

\subsubsection{Performance of the Heuristic Algorithm}
As our cost function~\eqn{obj} is dominated by SSU, the cost ratio between heuristic and ILP gives a measure of how much additional resources are allocated by the heuristic. We present the cumulative distribution function (CDF) of cost ratios in~\fig{fig_c:cdf}. Over all instances, the heuristic incurs 12\%, 9\%, and 7\% additional cost on average compared to the ILP for \textit{Fix-RT}, \textit{Fix-AT}, and \textit{Flex-AT}, respectively, while executing 2 or 3 orders of magnitude faster. \fig{fig_c:lar_cost} shows SSU incurred by the heuristic algorithm in Germany50 SN. \fig{fig_c:lar_cost} confirms the conclusions observed in \fig{fig_c:acratio}, but with much larger VN, SN, and full 4THz spectrum resources on SLinks and by repeating our simulations over a much higher number of instances to achieve statistical confidence. Although not reported in this paper for space limitation, the heuristic finds a solution in large scale simulations in $\approx3$ minutes. 

\vspace{-1em}
\section{Conclusion}\label{sec:conclusion}	\vspace{-0.5em}
This paper addresses a fundamental problem for the slicing of 5G transport networks, \ie reliable VN embedding with dedicated protection in an EON. To reduce resource overbuild of dedicated protection, we exploit bandwidth squeezing and VLink demand splitting over multiple SPaths, while leveraging the flexibilities offered by an EON. Our novel splitting model not only provides the opportunity to split a VLink demand across multiple SPaths but also across multiple spectrum segments of an SPath. We present an ILP formulation to solve the problem optimally, and a heuristic solution to address the computational complexity of the ILP. Our simulations on realistic network topologies show that bandwidth squeezing and demand splitting allow to significantly reduce spectrum usage for providing dedicated protection, especially in the case of fully-flexible EON. Our evaluation also shows that the opportunity to have multiple splits over the same path allows to save additional spectrum compared to a model that does not allow splitting on the same path, and our proposed heuristic performs close to the optimal solution.

\bibliographystyle{IEEEtran}
\bibliography{refdb}

\begin{thebibliography}{10}
\providecommand{\url}[1]{#1}
\csname url@samestyle\endcsname
\providecommand{\newblock}{\relax}
\providecommand{\bibinfo}[2]{#2}
\providecommand{\BIBentrySTDinterwordspacing}{\spaceskip=0pt\relax}
\providecommand{\BIBentryALTinterwordstretchfactor}{4}
\providecommand{\BIBentryALTinterwordspacing}{\spaceskip=\fontdimen2\font plus
\BIBentryALTinterwordstretchfactor\fontdimen3\font minus
  \fontdimen4\font\relax}
\providecommand{\BIBforeignlanguage}[2]{{%
\expandafter\ifx\csname l@#1\endcsname\relax
\typeout{** WARNING: IEEEtran.bst: No hyphenation pattern has been}%
\typeout{** loaded for the language `#1'. Using the pattern for}%
\typeout{** the default language instead.}%
\else
\language=\csname l@#1\endcsname
\fi
#2}}
\providecommand{\BIBdecl}{\relax}
\BIBdecl

\bibitem{gringeri2010flexible}
S.~Gringeri, B.~Basch, V.~Shukla, R.~Egorov, and T.~J. Xia, ``Flexible
  architectures for optical transport nodes and networks,'' \emph{IEEE
  Communications Magazine}, vol.~48, no.~7, pp. 40--50, 2010.

\bibitem{7193532}
S.~{Aleksic}, ``Towards fifth-generation {(5G)} optical transport networks,''
  in \emph{Proceedings of International Conference on Transparent Optical
  Networks (ICTON)}, July 2015, pp. 1--4.

\bibitem{boutaba2017elastic}
R.~Boutaba, N.~Shahriar, and S.~Fathi, ``Elastic optical networking for {5G}
  transport,'' \emph{Journal of Network and Systems Management}, vol.~25,
  no.~4, pp. 819--847, 2017.

\bibitem{foukas2017network}
X.~Foukas, G.~Patounas, A.~Elmokashfi, and M.~K. Marina, ``Network slicing in
  {5G}: Survey and challenges,'' \emph{IEEE Communications Magazine}, vol.~55,
  no.~5, pp. 94--100, 2017.

\bibitem{hadi2019dynamic}
M.~Hadi, M.~R. Pakravan, and E.~Agrell, ``Dynamic resource allocation in metro
  elastic optical networks using lyapunov drift optimization,'' \emph{Journal
  of Optical Communications and Networking}, vol.~11, no.~6, pp. 250--259,
  2019.

\bibitem{flexefail}
``Meeting {5G} transport requirements with {FlexE},'' White paper, ZTE, 2018.

\bibitem{4528661}
R.~{Roy} and B.~{Mukherjee}, ``Degraded-service-aware multipath provisioning in
  telecom mesh networks,'' in \emph{Proceedings of IEEE/OSA OFC/NFOEC}, Feb
  2008, pp. 1--3.

\bibitem{sone2011bandwidth}
Y.~Sone, A.~Watanabe, W.~Imajuku, Y.~Tsukishima, B.~Kozicki, H.~Takara, and
  M.~Jinno, ``Bandwidth squeezed restoration in spectrum-sliced elastic optical
  path networks (slice),'' \emph{IEEE/OSA Journal of Optical Communications and
  Networking}, vol.~3, no.~3, pp. 223--233, 2011.

\bibitem{shen2014optimal}
G.~Shen, Y.~Wei, and S.~Bose, ``Optimal design for shared backup path protected
  elastic optical networks under single-link failure,'' \emph{IEEE/OSA Journal
  of Optical Communications and Networking}, vol.~6, no.~7, pp. 649--659, 2014.

\bibitem{chen2015flexible}
X.~Chen, M.~Tornatore, S.~Zhu, F.~Ji, W.~Zhou, C.~Chen, D.~Hu, L.~Jiang, and
  Z.~Zhu, ``Flexible availability-aware differentiated protection in
  software-defined elastic optical networks,'' \emph{IEEE/OSA Journal of
  Lightwave Technology}, vol.~33, no.~18, pp. 3872--3882, 2015.

\bibitem{Assis:16}
K.~D.~R. Assis, S.~Peng, R.~C. Almeida, H.~Waldman, A.~Hammad, A.~F. Santos,
  and D.~Simeonidou, ``Network virtualization over elastic optical networks
  with different protection schemes,'' \emph{IEEE/OSA Journal of Optical
  Communication and Networking}, vol.~8, no.~4, pp. 272--281, Apr 2016.

\bibitem{Huang:2011:SMP:2042972.2042976}
S.~Huang, C.~U. Martel, and B.~Mukherjee, ``Survivable multipath provisioning
  with differential delay constraint in telecom mesh networks,'' \emph{IEEE/ACM
  Transactions on Networking}, vol.~19, no.~3, pp. 657--669, Jun. 2011.

\bibitem{khan2015simple}
M.~M.~A. Khan, N.~Shahriar, R.~Ahmed, and R.~Boutaba, ``{SiMPLE}: Survivability
  in multi-path link embedding,'' in \emph{Proceedings of IEEE/ACM/IFIP CNSM},
  2015, pp. 210--218.

\bibitem{studyisp}
A.~Markopoulou, G.~Iannaccone, S.~Bhattacharyya, C.-N. Chuah, and C.~Diot,
  ``Characterization of failures in an ip backbone,'' in \emph{Proceedings of
  IEEE INFOCOM 2004}, vol.~4, 2004, pp. 2307--2317.

\bibitem{rahman2010survivable}
M.~R. Rahman, I.~Aib, and R.~Boutaba, ``Survivable virtual network embedding,''
  in \emph{NETWORKING 2010}.\hskip 1em plus 0.5em minus 0.4em\relax Springer,
  2010, pp. 40--52.

\bibitem{ramamurthy2003survivable}
S.~Ramamurthy, L.~Sahasrabuddhe, and B.~Mukherjee, ``Survivable wdm mesh
  networks,'' \emph{IEEE/OSA Journal of Lightwave Technology}, vol.~21, no.~4,
  p. 870, 2003.

\bibitem{castro2012single}
A.~Castro, L.~Velasco, M.~Ruiz, and J.~Comellas, ``Single-path provisioning
  with multi-path recovery in flexgrid optical networks,'' in \emph{Proceedings
  of International Congress on Ultra Modern Telecommunications and Control
  Systems}, 2012, pp. 745--751.

\bibitem{liu2013survivable}
M.~Liu, M.~Tornatore, and B.~Mukherjee, ``Survivable traffic grooming in
  elastic optical networks—shared protection,'' \emph{Journal of lightwave
  technology}, vol.~31, no.~6, pp. 903--909, 2013.

\bibitem{7506633}
Y.~Wang, X.~Li, B.~Guo, T.~Gao, W.~Li, and S.~Huang, ``Survivable virtual
  optical network mapping in elastic optical networks with shared backup path
  protection,'' in \emph{25th Wireless and Optical Communication Conference
  (WOCC)}, 2016, pp. 1--4.

\bibitem{cai2016multicast}
A.~Cai, J.~Guo, R.~Lin, G.~Shen, and M.~Zukerman, ``Multicast routing and
  distance-adaptive spectrum allocation in elastic optical networks with shared
  protection,'' \emph{IEEE/OSA Journal of Lightwave Technology}, vol.~34,
  no.~17, pp. 4076--4088, 2016.

\bibitem{yin2016shared}
S.~Yin, S.~Huang, B.~Guo, Y.~Zhou, H.~Huang, M.~Zhang, Y.~Zhao, J.~Zhang, and
  W.~Gu, ``Shared-protection survivable multipath scheme in flexible-grid
  optical networks against multiple failures,'' \emph{Journal of Lightwave
  Technology}, vol.~35, no.~2, pp. 201--211, 2016.

\bibitem{ou2006survivable}
C.~Ou, L.~H. Sahasrabuddhe, K.~Zhu, C.~U. Martel, and B.~Mukherjee,
  ``Survivable virtual concatenation for data over sonet/sdh in optical
  transport networks,'' \emph{IEEE/ACM Transactions on Networking (TON)},
  vol.~14, no.~1, pp. 218--231, 2006.

\bibitem{hauser2002capacity}
O.~Hauser, M.~Kodialam, and T.~Lakshman, ``Capacity design of fast path
  restorable optical networks,'' in \emph{Proceedings. Twenty-First Annual
  Joint Conference of the IEEE Computer and Communications Societies},
  vol.~2.\hskip 1em plus 0.5em minus 0.4em\relax IEEE, 2002, pp. 817--826.

\bibitem{klinkowski2013genetic}
M.~Klinkowski, ``A genetic algorithm for solving rsa problem in elastic optical
  networks with dedicated path protection,'' in \emph{Proceedings of
  International Joint Conference CISIS’12-ICEUTE{\'{}} 12-SOCO{\'{}} 12
  Special Sessions}, 2013, pp. 167--176.

\bibitem{klinkowski2013evolutionary}
M.~Klinkowski\hspace{0.1em}, ``An evolutionary algorithm approach for dedicated
  path protection problem in elastic optical networks,'' \emph{Cybernetics and
  Systems}, vol.~44, no. 6-7, pp. 589--605, 2013.

\bibitem{walkowiak2014routing}
K.~Walkowiak, M.~Klinkowski, B.~Rabiega, and R.~Go{\'s}cie{\'n}, ``Routing and
  spectrum allocation algorithms for elastic optical networks with dedicated
  path protection,'' \emph{Elsevier Optical Switching and Networking}, vol.~13,
  pp. 63--75, 2014.

\bibitem{xie2014survivable}
W.~Xie, J.~P. Jue, Q.~Zhang, X.~Wang, Q.~She, P.~Palacharla, and M.~Sekiya,
  ``Survivable impairment-constrained virtual optical network mapping in
  flexible-grid optical networks,'' \emph{IEEE/OSA Journal of Optical
  Communications and Networking}, vol.~6, no.~11, pp. 1008--1017, 2014.

\bibitem{chen2016cost}
B.~Chen, J.~Zhang, W.~Xie, J.~P. Jue, Y.~Zhao, and G.~Shen, ``Cost-effective
  survivable virtual optical network mapping in flexible bandwidth optical
  networks,'' \emph{IEEE/OSA Journal of Lightwave Technology}, vol.~34, no.~10,
  pp. 2398--2412, 2016.

\bibitem{zhao2016energy}
Y.~Zhao, B.~Chen, J.~Zhang, and X.~Wang, ``Energy efficiency with sliceable
  multi-flow transponders and elastic regenerators in survivable virtual
  optical networks,'' \emph{IEEE Transactions on Communications}, vol.~64,
  no.~6, pp. 2539--2550, 2016.

\bibitem{goscien2016survivable}
R.~Go{\'s}cie{\'n}, K.~Walkowiak, and M.~Tornatore, ``Survivable multipath
  routing of anycast and unicast traffic in elastic optical networks,''
  \emph{IEEE/OSA Journal of Optical Communications and Networking}, vol.~8,
  no.~6, pp. 343--355, 2016.

\bibitem{ruan2013survivable}
L.~Ruan and N.~Xiao, ``Survivable multipath routing and spectrum allocation in
  {OFDM}-based flexible optical networks,'' \emph{Journal of Optical
  Communications and Networking}, vol.~5, no.~3, pp. 172--182, 2013.

\bibitem{zhang2012survey}
G.~Zhang, M.~De~Leenheer, A.~Morea, and B.~Mukherjee, ``A survey on
  {OFDM}-based elastic core optical networking,'' \emph{IEEE Communications
  Surveys \& Tutorials}, vol.~15, no.~1, pp. 65--87, 2012.

\bibitem{pages2014optimal}
A.~Pag{\`e}s, J.~Perell{\'o}, S.~Spadaro, and J.~Comellas, ``Optimal route,
  spectrum, and modulation level assignment in split-spectrum-enabled dynamic
  elastic optical networks,'' \emph{IEEE/OSA Journal of Optical Communications
  and Networking}, vol.~6, no.~2, pp. 114--126, 2014.

\bibitem{shahriarinfocom19}
N.~Shahriar, S.~Taeb, S.~R. Chowdhury, M.~Tornatore, R.~Boutaba, J.~Mitra, and
  M.~Hemmati, ``Achieving a fully-flexible virtual network embedding in elastic
  optical networks,'' in \emph{{IEEE INFOCOM}}, 2019.

\bibitem{bernstein2006vcat}
G.~Bernstein, D.~Caviglia, R.~Rabbat, and H.~Van~Helvoort, ``Vcat-lcas in a
  clamshell,'' \emph{IEEE Communications Magazine}, vol.~44, no.~5, pp. 34--36,
  2006.

\bibitem{flexe}
\BIBentryALTinterwordspacing
``Optical internetworking forum - flex ethernet implementation agreement 1.1,''
  June 2017. [Online]. Available:
  \url{https://www.oiforum.com/wp-content/uploads/2019/01/FLEXE1.1.pdf}
\BIBentrySTDinterwordspacing

\bibitem{chowdhury2009virtual}
N.~M.~K. Chowdhury, M.~R. Rahman, and R.~Boutaba, ``Virtual network embedding
  with coordinated node and link mapping,'' in \emph{Proceedings of IEEE
  INFOCOM}, 2009, pp. 783--791.

\bibitem{wang2011study}
Y.~Wang, X.~Cao, and Y.~Pan, ``A study of the routing and spectrum allocation
  in spectrum-sliced elastic optical path networks,'' in \emph{Proceedings of
  IEEE INFOCOM}, 2011, pp. 1503--1511.

\bibitem{chatterjee2015routing}
B.~C. Chatterjee, N.~Sarma, and E.~Oki, ``Routing and spectrum allocation in
  elastic optical networks: A tutorial,'' \emph{IEEE Communications Surveys \&
  Tutorials}, vol.~17, no.~3, pp. 1776--1800, 2015.

\bibitem{brualdi1992introductory}
R.~A. Brualdi, ``Introductory combinatorics,'' \emph{New York}, vol.~3, 1992.

\bibitem{SNDlib}
Sndlib repository [online] http://sndlib.zib.de/home.action.

\bibitem{itu-t}
{ITU-T G.694.1.} spectral grids for {WDM} applications: {DWDM} frequency
  grid.[online] https://www.itu.int/rec/t-rec-g.694.1/en.

\end{thebibliography}
\appendices
\newpage
\section{\alg{alg:VLink-Ordering} optimality proof}
\begin{restatable}{theorem}{onlytheorem}
\alg{alg:VLink-Ordering} returns a VLink ordering with the minimum commonality index. 
\end{restatable}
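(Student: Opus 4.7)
The plan is a classical exchange/contradiction argument for min-degree greedy vertex elimination, adapted to the weighted auxiliary graph $Z$. First I would reinterpret the algorithm's dynamic choices in terms of the final order it outputs: Algorithm 2 fills $O$ from the last position backwards, and by the update rule (lines 9--11) the node set $N_j$ and edge set $A_j$ present at iteration $j$ are exactly the VLinks still to be placed in the first $|N_j|$ slots of $O$, together with the edges among them. Consequently, the weighted degree \eqn{eq:eq-node-weight} of a node $n_{\bar{e}_i}$ computed inside the loop at iteration $j$ coincides with the order-dependent quantity $w(\bar{e}_i)^O$ of \eqn{eq:eq-node-weight1} for the VLink placed at that step, because every edge $(n_{\bar{e}_i}, n_{\bar{e}_j})$ still present in $Z_j$ corresponds to a VLink $\bar{e}_j$ that precedes $\bar{e}_i$ in $O$. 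Letting $d_j$ denote the minimum weighted degree observed at iteration $j$, equation \eqn{eq:eq-link-weight} then gives $w^O = \max_j d_j$.

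Next I would pick the iteration $j^*$ that attains this maximum and set $k := d_{j^*}$. Since the greedy rule selects the node of minimum weighted degree, every node of $Z_{j^*}$ has weighted degree at least $k$ in $Z_{j^*}$. To lower-bound an arbitrary competitor $O'$, I would let $\bar{e}$ be the VLink of $N_{j^*}$ that appears last in $O'$; all other VLinks of $N_{j^*}$ then precede $\bar{e}$ in $O'$, so each edge weight $w(\bar{e}, \cdot)$ within $Z_{j^*}$ contributes to $w(\bar{e})^{O'}$ via \eqn{eq:eq-node-weight1}. Hence $w(\bar{e})^{O'}$ is at least the weighted degree of $\bar{e}$ in $Z_{j^*}$, which is at least $k$, yielding $w^{O'} \geq k = w^O$. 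Minimizing over $O'$ finishes the proof.

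The main obstacle is the bookkeeping in the first step: showing precisely that the edges surviving into iteration $j^*$ coincide with the pairs of VLinks that will both be placed before the corresponding VLink in $O$, so that the greedy weighted-degree test faithfully realizes \eqn{eq:eq-node-weight1}. Once this correspondence is pinned down, the optimality argument collapses to the short exchange argument in the second step; selecting the correct witness VLink $\bar{e}$ from $N_{j^*}$ inside an arbitrary $O'$ is the only other nontrivial choice, and the rest is routine manipulation of the definitions.
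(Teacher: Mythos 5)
Your proposal is correct, but it proves the theorem by a genuinely different route than the paper. The paper argues by contradiction with an exchange argument: assuming a strictly better ordering $o^*$ exists, it repeatedly moves the VLink matching the highest mismatched position of the algorithm's output into place, and a dedicated lemma shows each such move never increases the commonality index, so after finitely many moves $o^*$ is transformed into the algorithm's ordering without increasing $w^{o^*}$, contradicting strict superiority. You instead give a direct lower bound in the style of the classical min-degree (degeneracy) argument: you first observe that for the VLink selected at each iteration the weighted degree \eqn{eq:eq-node-weight} in the surviving auxiliary subgraph equals its effective commonality \eqn{eq:eq-node-weight1} in the output order, so $w^O=\max_j d_j$; then, at the critical iteration $j^*$ with $d_{j^*}=k$, every surviving node has weighted degree at least $k$, and in any competitor $O'$ the member of $N_{j^*}$ placed last among $N_{j^*}$ has all the others as predecessors, so by non-negativity of the weights $w(\bar{e}_i,\bar{e}_j)$ (they are cardinalities) its effective commonality is at least $k$, giving $w^{O'}\geq k = w^O$ via \eqn{eq:eq-link-weight}. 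Your approach buys a shorter proof that avoids the repeated-exchange bookkeeping and implicitly yields the max--min (weighted degeneracy) characterization of the optimum; the paper's approach is more constructive, exhibiting an explicit sequence of index-preserving transformations from any optimal ordering to the greedy one. The only hypotheses your argument leans on --- that the surviving graph at each iteration is the induced subgraph on the unplaced VLinks, and that edge weights are non-negative --- both hold here, so the bookkeeping step you flag as the main obstacle does go through.
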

\begin{proof}
Suppose VLink ordering $o$ which is generated by \alg{alg:VLink-Ordering} does not have the minimum commonality index, therefore, there exists a VLink ordering $o^*$ for which $w^{o} > w^{o^*}$. Let $\bar{e}_1^o, \bar{e}_2^o, ..., \bar{e}_{\bar{|E|}}^o$ and $\bar{e}_1^{o^*}, \bar{e}_2^{o^*}, ..., \bar{e}_{\bar{|E|}}^{o^*}$ denote the VLink ordering $o$ and $o^*$ respectively. Since $w^{o} \neq w^{o^*}$ there exists at least one index $i$ for  which $\bar{e}_j^{o}$ and $\bar{e}_j^{o^*}$ are not corresponding to the same VLink. Let $i_{max}$ be the maximum index with this condition. Since both VLink ordering $o$ and $o^*$ contain all the VLinks, there should be an index $j$ such that $\bar{e}_j^{o^*}$ corresponds to the same VLink as $\bar{e}_{i_{max}}^{o}$. We create a new VLink ordering $o^*_{1}$ by moving the $\bar{e}_j^{o^*}$ to the $i_{max}$th index in the $o^*$ VLink ordering (\fig{fig:create-new-order}).


\begin{figure}[!h]
	\centering
	\includegraphics[trim=1.1cm 1.3cm 1.75cm 1.5cm,width=0.45\textwidth]{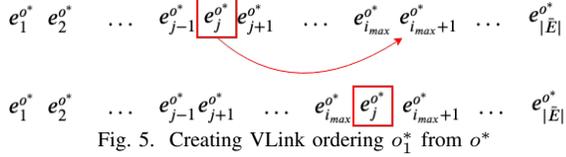}
	\caption{Creating VLink ordering $o^*_{1}$ from $o^*$}
	\label{fig:create-new-order}
\end{figure}

\begin{lemma}
\label{lemma1}
The commonality index of VLink ordering $o^*_{1}$ is less than or equal to the commonality index of the VLink ordering $o^*$ $(w^{o^*_{1}} \leq w^{o^*})$. \end{lemma}

\begin{proof}
We divide the proof of the lemma in two parts. At first, we prove \textit{i)} the effective commonality of all the VLinks (except $\bar{e}_j^{o^*}$) in VLink ordering $o^*_1$ is less than or equal to the effective commonality of the same VLink in VLink ordering $o^*$, i.e. $\forall \bar{e}_i \in \bar{E} - \{\bar{e}_j^{o^*}\}: w(\bar{e}_i)^{o^*_{1}} \leq w(\bar{e}_i)^{o^*}$. Then we prove \textit{ii)} $w(\bar{e}_j^{o^*})^{o^*_{1}} \leq w(\bar{e}_{i_{max}}^{o^*})^{o^*}$, hence, for each VLink in VLink ordering $o^*_1$ there exists at least one VLink in VLink ordering $o^*$ such that its effective commonality is greater or equal, i.e. $\forall \bar{e}_i \in \bar{E}, \exists \bar{e}_j: w(\bar{e}_i)^{o^*_{1}} \leq w(\bar{e}_j)^{o^*}$. From \eqn{eq:eq-link-weight}, the commonality index of a VLink ordering is equal to the maximum effective commonality of all VLinks in that ordering, therefore, $w^{o^*_1} \leq w^{o^*}$.

To prove \textit{i)}, we divide VLinks into 3 groups (\fig{fig:three-groups}):
\begin{enumerate}
    \item $A = \{\bar{e}_i^{o^*} | 1 \leq i < j\}$ which includes all the VLinks that come before $\bar{e}_j^{o^*}$ in both $o^*$ and $o^*_{1}$.
    \item $B = \{\bar{e}_i^{o^*} | j < i \leq i_{max} \}$ which contains all the VLinks that come after $\bar{e}_j^{o^*}$ in $o^*$, but before $\bar{e}_j^{o^*}$ in $o^*_1$.
    \item $C = \{\bar{e}_i^{o^*} | i_{max} < i \leq |\bar{E}| \}$ which includes the VLinks that come after $\bar{e}_j^{o^*}$ in both $o^*$ and $o^*_{1}$.
\end{enumerate}

\begin{figure}[!h]
	\centering
	\includegraphics[trim=1.1cm 2cm 1.75cm 1.2cm,width=0.45\textwidth]{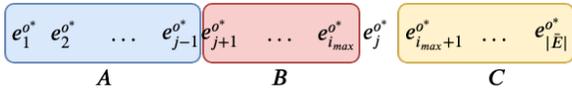}
	\caption{Dividing the VLinks into 3 groups to check their commonality indexes}
	\label{fig:three-groups}
\end{figure}

From \eqn{eq:eq-node-weight1}, we know that the effective commonality of each VLink only depends on the preceding VLinks. Hence, considering that the set of preceding VLinks is the same $\forall \bar{e}_i \in A \cup C$ in ordering $o^*$ and $o^*_1$, their effective commonalities are the same, i.e. $\forall \bar{e}_i \in A \cup C: w(\bar{e}_i^{o^*})^{o^*_{1}} = w(\bar{e}_i^{o^*})^{o^*}$. In addition, the set of preceding VLinks of $\bar{e}_i \in B$ in $o^*_{1}$ is the same as their set of preceding VLinks in $o^*$ minus $\bar{e}_j^{o^*}$. Therefore, $\forall \bar{e}_i \in B: w(\bar{e}_{i})^{o^*_1} = w(\bar{e}_{i})^{o^*} - w(\bar{e}_{i}, \bar{e}_j^{o^*}) \leq w(\bar{e}_{i})^{o^*}$.

For proving \textit{ii)}, we need to show that $w(\bar{e}_j^{o^*})^{o^*_{1}} \leq w(\bar{e}_{i_{max}}^{o^*})^{o^*}$. Recall that  \alg{alg:VLink-Ordering}, at each iteration, finds the VLink with minimum weight and puts it at the first empty place at the end of ordering (Line 4--7). Since the ordering $o$, which is generated by \alg{alg:VLink-Ordering}, is identical to ordering $o^*$ from index $i_{max}+1$ to $\bar{E}$, it means that VLink $\bar{e}_j^{o} = \bar{e}_{i_{max}}^{o^*}$ has the minimum weight, therefore $w(n_{\bar{e}_j^{o}}) \leq w(n_{\bar{e}_{i_{max}}^{o^*}})$ at $(n - i_{max} + 2)$-th step (to find the VLink that should be at index $i_{max}$). Since \eqn{eq:eq-node-weight} computes the effective commonality of the all the remaining VLinks assuming they will be at the last empty spots $w(n_{\bar{e}_j^{o}}) = w(\bar{e}_j^{o^*})^{o^*_{1}}$ and $w(n_{\bar{e}_{i_{max}}^{o^*}}) = w(\bar{e}_{i_{max}}^{o^*})^{o^*}$. So $w(\bar{e}_j^{o^*})^{o^*_{1}} \leq w(\bar{e}_{i_{max}}^{o^*})^{o^*}$.

By showing these two parts, we prove that $w^{o^*_{1}} \leq w^{o^*}$.
\end{proof}

We start from VLink ordering $o^*_{0} = o^*$ and at $i$th step we generate $o^*_{i}$ from $o^*_{i-1}$ using the same approach to generate $o^*_{1}$ from $o^*$. Since the maximum index for which $o^*_{1}$ and $o$ are different decreases after each step, finally, after $s \leq |\bar{E}|$ steps, $o = o^*_{s}$. By \cref{lemma1} we know that the commonality index does not increase after each step, i.e., $w^{o^*_{i}} \leq w^{o^*_{i-1}}$. Therefore $w^o = w^{o^*_{s}} \leq w^{o^*}$, which contradicts the initial assumption. This  means that \alg{alg:VLink-Ordering} returns an ordering with the minimum commonality index.
\end{proof}

\end{document}